\newcommand{\doublewidetilde}[1]{{%
		\mathpalette\double@widetilde{#1}%
}}
\newcommand{\double@widetilde}[2]{%
	\sbox\z@{$\m@th#1\widetilde{#2}$}%
	\ht\z@=.5\ht\z@
	\widetilde{\box\z@}%
}
\newtheorem{theorem}{Theorem}
\newtheorem{lemma}{Lemma}
\newtheorem{corollary}{Corollary}
\newtheorem{example}{Example}
\newtheorem{remark}{Remark}
\newtheorem{assumption}{Assumption}
\definecolor{red}{rgb}{0,0,0}
\begin{document}
%

\title{\huge Power Control in Cellular Massive MIMO with Varying User Activity: A Deep Learning Solution}

\author{\IEEEauthorblockN{\fontsize{11}{11}{Trinh Van Chien, Thuong Nguyen Canh,  \textit{Member}, \textit{IEEE}, Emil Bj\"{o}rnson, \textit{Senior Member}, \textit{IEEE},\\  and Erik G. Larsson, \textit{Fellow}, \textit{IEEE}}}
	\thanks{
		T. V. Chien was with the Department of Electrical
		Engineering (ISY), Link\"{o}ping University, 581 83 Link\"{o}ping, Sweden. He is now with the School of Electronics and Telecommunications, Hanoi University of Science and Technology, Vietnam (email: trinhchien.dt3@gmail.com). E. Bj\"{o}rnson and E. G. Larsson are with the Department of Electrical
		Engineering (ISY), Link\"{o}ping University, 581 83 Link\"{o}ping,
		Sweden (email: emil.bjornson@liu.se; erik.g.larsson@liu.se). T. N. Canh was with the Institute for Datability Science, Osaka University, Japan. He is now with Tencent America as a senior researcher (email: ngcthuong@ids.osaka-u.ac.jp). Parts of this paper were presented at ICC 2019 \cite{Chien2019a}.
	}
	\thanks{This paper was supported by the European Union's Horizon 2020 research and innovation programme under grant agreement No 641985 (5Gwireless). It was also supported by ELLIIT, CENIIT, and  the Vietnam’s Ministry of Education and Training (MOET) Project B2019-BKA-10.}
}

\maketitle

\begin{abstract}
This paper considers the sum spectral efficiency (SE) optimization problem in multi-cell Massive MIMO systems with a varying number of active users. This is formulated as a joint pilot and data power control problem. Since the problem is non-convex, we first derive a novel iterative algorithm that obtains a stationary point in polynomial time. To enable real-time implementation, we also develop a deep learning solution. The proposed neural network, PowerNet, only uses the large-scale fading information to predict both the pilot and data powers. The main novelty is that we exploit the problem structure to design a single neural network that can handle a dynamically varying number of active users; hence, PowerNet is simultaneously approximating many different power control functions with varying number inputs and outputs. This is not the case in prior works and thus makes PowerNet an important step towards a practically useful solution. Numerical results demonstrate that PowerNet only loses  $2\%$ in sum SE, compared to the iterative algorithm, in a nine-cell system with up to $90$ active users per in each coherence interval, and the runtime was only $0.03$ ms on a graphics processing unit (GPU). When good data labels are selected for the training phase, PowerNet can yield better sum SE than by solving the optimization problem with one initial point.
\end{abstract}

\begin{IEEEkeywords}
	Massive MIMO, Pilot and Data Power Control, Deep Learning, Convolutional Neural Network.
\end{IEEEkeywords}

\IEEEpeerreviewmaketitle

\section{Introduction}
Future networks must provide higher channel capacity, lower latency, and better quality of service than contemporary networks \cite{Andrews2014b}. These goals can only be achieved by drastic improvements of the wireless network architecture \cite{Wang2014}. Among potential candidates, Massive MIMO (multiple-input multiple-output) is an emerging physical layer technology which allows a base station (BS) equipped with many antennas to serve tens of users on the same time and frequency resource \cite{Marzetta2010a}. Utilizing the same spectrum and power budget, Massive MIMO can increase both spectral and energy efficiency by orders of magnitude compared with contemporary systems. This is because the propagation channels of different users decorrelate when increasing the number of antennas at each BS and strong array gains are achievable with little inter-user interference. 

Resource allocation is important in Massive MIMO networks to deal with the inter-user interference and, particularly, so-called pilot contamination \cite{Jose2011b, al2018successive}. Many resource allocation problems in Massive MIMO are easier to solve than in conventional systems since the channel hardening makes the utility functions only depend on the large-scale fading coefficients which are stable over a long time period \cite{Bjornson2016b}, while adaptation to the quickly varying small-scale fading is conventionally needed. 
Table~\ref{tableOptimizationProblems} categorizes the existing works on power control for cellular Massive MIMO in terms of utility functions and optimization variables. There are only a few works that jointly optimize the pilot and data powers, which is of key importance to deal with pilot contamination in multi-cell systems. In this paper, we optimize the sum SE with respect to the pilot and data powers. To the best of our knowledge, it is the first paper that considers this problem in
cellular Massive MIMO systems, where each BS serves a varying number of users.  
Note that we did not include single-cell papers in Table~\ref{tableOptimizationProblems}. For example, 
 the paper  \cite{Victor2017a} exploits the special structure arising from imperfect channel state information (CSI) in single-cell systems to maximize the sum SE using an efficient algorithm. This finds the globally optimal pilot and data powers, but it does not extend to multi-cell systems since the structure is entirely different.
 
Deep learning \cite{Goodfellow-et-al-2016} is a popular data-driven approach to solve complicated problems and has shown superior performance in various applications in image restoration, pattern recognition, etc. Despite its complicated and rather heuristic training phase, deep learning has recently shown promising results in communication applications \cite{o2017introduction, liang2018towards}. From the universal approximation theorem \cite{hornik1989}, deep learning can learn to approximate functions for which we have no closed-form expression. In particular, it can be used to approximate the solution to resource allocation problems and achieve greatly reduced computational complexity. The authors in \cite{sun2018learning} construct a fully-connected deep neural network to allocate transmit power in an effort to maximize the sum SE in a wireless system serving a few tens of users. The same network structure is reused in \cite{zappone2018model} to solve an energy-efficiency problem. Standard fully-connected feed-forward networks with many layers are used, but since the considered problems are challenging, the prediction performance is substantially lower than when directly solving the optimization problems, e.g., the loss varies from $5\%$ to $16\%$ depending on the system setting. By using the methodology of ensemble learning to select the best prediction results among multiple fully connected feed-forward neural networks, \cite{liang2018towards} is able to achieve better performance than when solving a sum SE optimization problem to a stationary point. The paper considers a multi-user single-input single-output system in which each BS serves a single user and has perfect channel state information (CSI). Moreover, previous neural network designs for resource allocation in wireless communications are utilizing the instantaneous CSI which is practically questionable, especially in cellular Massive MIMO systems. This is because the small-scale fading varies very quickly and the deep neural networks have very limited time to process the collection of all the instantaneous channel vectors, each having a number of parameters proportional to the number of BS antennas. The recent work in \cite{Luca2018DL} designs a neural network utilizing only statistical channel information to predict transmit powers in an equally-loaded cellular Massive MIMO system with spatially correlated fading. Although the prediction performance is good, the paper does not discuss how to generalize the approach to having varying number of users per cell.

\begin{table*}[t]
	\caption{Previous works on power control for cellular Massive MIMO}
	\centerline{ 
		\begin{tabular}{|c|c|c|}
			\hline
			\diagbox[width=11.5em]{Utility function }{Variables} & Data Powers Only & Joint Pilot \& Data \\
			\hline
			Minimum transmit power & \cite{senel2019joint, massivemimobook, Chien2016b} & \cite{Guo2014a} \\
			\hline 		
			Max-min fairness  & \cite{Marzetta2016a, adhikary2017uplink, Chien2016b, massivemimobook, Luca2018DL}  & \cite{Chien2017b,Ghazanfari2019, van2018joint}  \\
			\hline
			Maximum product SINR & \cite{massivemimobook, Luca2018DL} & \cite{dao2018disjoint} \\
			\hline
			Maximum sum SE & \cite{Chien2019ICCa, tran2016network, li2017massive, Chien2018a} & \textbf{This paper}\\
			\hline
		\end{tabular}
	} \vspace*{-0.2cm}
	\label{tableOptimizationProblems}
\end{table*}

The main issue with the previous deep learning solutions is that many different neural networks, with varying number of inputs and outputs, need to be trained and used depending on the number of active users. For example, if there are $L$ cells and between $0$ and $K_{\max}$ users per cell, then  \cite{Luca2018DL} would require $L2^{K_{\max}}$ different neural networks to cover all the cases that can appear. Even in the small setup of $L=4$ and $K_{\max}=5$ considered in \cite{Luca2018DL}, this requires $128$ different neural networks. We propose a solution to this problem by designing a single neural network that can handle a varying number of users; more precisely, we exploit the structure of the considered power control problem to train a neural network to simultaneously approximate $2^{K_{\max}}$ different power control functions, having varying number users (i.e., inputs/outputs).

In this paper, we consider the joint optimization of the pilot and data powers for maximum sum SE in multi-cell Massive MIMO systems. 
Our main contributions are:
\begin{itemize}
 \item We formulate a sum ergodic SE maximization problem, with the data and pilot powers as variables, where each cell may have a different number of active users. To overcome the inherent non-convexity, an equivalent problem with element-wise convex structure is derived. An alternating optimization algorithm is proposed to find a stationary point. Each iteration is solved in closed form.
 
 \item To reduce the computational complexity, design a deep convolutional neural network (CNN) that learns the solution to the alternating optimization algorithm, from one or multiple starting points. Our deep CNN is named PowerNet, has a residual structure, and is densely connected. PowerNet is designed for a maximum number of users and can manage any number of active users up to the maximum. The inputs to PowerNet are the large-scale fading coefficients between each active user and BS, while the outputs are the pilot and data powers. Zeros are used for inactive users. Note that the number of inputs/outputs is independent of the number of antennas.
 
 \item Numerical results manifest the effectiveness of the proposed alternating optimization algorithm as compared to the baseline of full transmit power. Meanwhile, PowerNet achieves highly accurate power prediction and a sub-milliseconds runtime.
\end{itemize}

The remainder of this paper is organized as follows: Section~\ref{Section: System Model} introduces our cellular Massive MIMO system model, with a varying number of users per cell, and the basic ergodic SE analysis. We formulate and solve the joint pilot and data power control problem for maximum sum SE in Section~\ref{Section:JointPilotDataforSumSE}. The proposed low complexity deep learning solution is given in Section~\ref{Section:CNN}. Finally, numerical results are shown in Section~\ref{Section:NumericalResults} and we provide the main conclusions in Section~\ref{Section:Conclusion}.

\textit{Notation}: Upper (lower) bold letters are used to denote matrices (vectors). $\mathbb{E} \{ \cdot \}$ is the expectation of a random variable. $(\cdot)^H$ is the Hermitian transpose and the cardinality of set $\mathcal{A}$ is $| \mathcal{A}|$. We let $\mathbf{I}_M$ denote the $M \times M$ identity matrix. $\mathbb{C}, \mathbb{R},$ and $\mathbb{R}_{+}$ denote the complex, real and non-negative real field, respectively. The floor operator denotes as $\left \lfloor \cdot \right \rfloor$ and the Frobenius norm as $\| . \|_F$. Finally, $\mathcal{CN}(\cdot, \cdot)$ is circularly symmetric complex Gaussian distribution.

\section{Dynamic Massive MIMO System Model} \label{Section: System Model}
We consider a multi-cell Massive MIMO system comprising of $L$ cells, each having a BS equipped with $M$ antennas. We call it a dynamic system model since each BS is able to serve $K_{\max}$ users, but maybe only a subset of the users are active at any given point in time. We will later model the active subset of users randomly and exploit this structure when training a neural network. Since the wireless channels vary over time and frequency, we consider the standard block fading model \cite{Marzetta2016a} where the time-frequency resources are divided into coherence intervals of $\tau_c$ modulation symbols for which the channels are static and frequency flat. At an arbitrary given coherence interval, BS~$l$ is serving a subset of active users. We define the set $\mathcal{A}_l $ containing the indices of all active users in cell~$l$, for which $0 \leq | \mathcal{A}_l | \leq K_{\max}$. The channel between active user~$t \in \mathcal{A}_i$ in cell~$i$ and BS~$l$ is denoted as $\mathbf{h}_{i,t}^l \in \mathbb{C}^{M}$ and follows an independent and identically distributed (i.i.d.) Rayleigh fading distribution:
\begin{equation}
\mathbf{h}_{i,t}^l \sim \mathcal{CN} \left(\mathbf{0}, \beta_{i,t}^l \mathbf{I}_{M} \right),
\end{equation}
where $\beta_{i,t}^l\geq0$ is the large-scale fading coefficient that models geometric pathloss and shadow fading. In this paper, we consider uncorrelated Rayleigh fading because the spectral efficiency obtained with this tractable model well matches the results obtained in non-line-of-sight measurement \cite{Gao2015a}. However, the data-driven part presented later in this paper can applied together with any other channel model. The distributions are known at the BSs, but the realizations are unknown and need to be estimated in every coherence interval using a pilot transmission phase.
\subsection{Uplink Pilot Transmission Phase}
We assume that a set of $K_{\max}$ orthonormal pilot signals are used in the system. 
User~$k$ in each cell is preassigned the pilot $\pmb{\psi}_{k} \in \mathbb{C}^{K_{\max}}$ with $\|\pmb{\psi}_{k} \|^2 = K_{\max}$, no matter if the user is active or not in the given coherence interval, but this pilot is only transmitted when the user has data to transmit (or receive). This pilot assignment guarantees that there is no intra-cell pilot contamination. The channel estimation of a user is interfered by the other users that use the pilot signal, which is called pilot contamination. The received baseband pilot signal $\mathbf{Y}_l \in \mathbb{C}^{M \times K_{\max} }$ at BS~$l$ is
\begin{equation}
\mathbf{Y}_l = \sum_{i=1 }^L \sum_{t \in \mathcal{A}_i } \sqrt{\hat{p}_{i,t}}\mathbf{h}_{i,t}^l \pmb{\psi}_{t}^H + \mathbf{N}_l,
\end{equation}
where $\mathbf{N}_l \in \mathbb{C}^{M \times K }$ is the additive noise with  i.i.d.~$\mathcal{CN}(0, \sigma_{\mathrm{UL}}^2)$ elements. Meanwhile, $\hat{p}_{i,t}$ is the pilot power that active user~$t$ in cell~$i$ allocates to its pilot transmission. The channel between a particular user~$t  \in \mathcal{A}_i$ in cell~$i$ and BS~$l$ is estimated from
\begin{equation}
\mathbf{y}_{i,t} = \mathbf{Y}_l \pmb{\psi}_{t} = \sum_{i' \in \mathcal{P}_t } \sqrt{\hat{p}_{i',t}} \mathbf{h}_{i',t}^l \pmb{\psi}_{t}^H  \pmb{\psi}_{t}+ \mathbf{N}_l \pmb{\psi}_{t},
\end{equation}
where the set $\mathcal{P}_t$ contains the indices of cells having user~$t$ in active mode, which is formulated from the user activity set of each cell  as
\begin{equation}
\mathcal{P}_t = \left\{ i' \in \{1, \ldots,L \}:  t \in \mathcal{A}_{i'} \right\}.
\end{equation} 
By using minimum mean square error  (MMSE) estimation~\cite{Kay1993a}, the channel estimate of an arbitrary active user is as follows. 
\begin{lemma}\label{Lemma:DynamicChannelEst}
	If BS~$l$ uses MMSE estimation, the channel estimate of active user~$t$ in cell~$i$ is
	\begin{equation}
	\hat{\mathbf{h}}_{i,t}^l = \mathbb{E} \left\{ \mathbf{h}_{i,t}^l | \mathbf{y}_{i,t} \right\} = \frac{\beta_{i,t}^l \sqrt{\hat{p}_{i,t}} }{K_{\max} \sum_{i' \in \mathcal{P}_t } \hat{p}_{i',t} \beta_{i',t}^l + \sigma_{\mathrm{UL}}^2  } \mathbf{y}_{i,t},
	\end{equation}
	which follows a complex Gaussian distribution as
	\begin{equation} \label{eq:ChannelEstDynamicMRC}
	\begin{split}
	\hat{\mathbf{h}}_{i,t}^l &\sim \mathcal{CN} \left(\mathbf{0}, \frac{K_{\max} (\beta_{i,t}^l)^2 \hat{p}_{i,t}  }{K_{\max} \sum_{i' \in \mathcal{P}_t } \hat{p}_{i',t} \beta_{i',t}^l + \sigma_{\mathrm{UL}}^2  } \mathbf{I}_{M} \right).
	\end{split}
	\end{equation}
	The estimation error  $\mathbf{e}_{i,t}^l =  \mathbf{h}_{i,t}^l - \hat{\mathbf{h}}_{i,t}^l$ is independently distributed as
	\begin{equation} \label{eq:estimationerror}
	\mathbf{e}_{i,t}^l \sim \mathcal{CN} \left(\mathbf{0},  \frac{K_{\max} \sum_{i' \in \mathcal{P}_t \setminus \{ i\} } \hat{p}_{i',t} \beta_{i,t}^l \beta_{i',t}^l + \beta_{i,t}^l \sigma_{\mathrm{UL}}^2 }{K_{\max} \sum_{i' \in \mathcal{P}_t } \hat{p}_{i',t} \beta_{i',t}^l + \sigma_{\mathrm{UL}}^2  } \mathbf{I}_M \right).
	\end{equation}
\end{lemma}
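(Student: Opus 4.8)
The plan is to exploit the fact that, under i.i.d.\ Rayleigh fading and i.i.d.\ Gaussian noise, the channel $\mathbf{h}_{i,t}^l$ and the processed observation $\mathbf{y}_{i,t}$ are \emph{jointly circularly-symmetric complex Gaussian} and zero-mean. For such vectors the conditional mean $\mathbb{E}\{\mathbf{h}_{i,t}^l \mid \mathbf{y}_{i,t}\}$ coincides with the linear MMSE estimate, so that $\hat{\mathbf{h}}_{i,t}^l = \mathbf{R}_{hy}\mathbf{R}_{yy}^{-1}\mathbf{y}_{i,t}$ with $\mathbf{R}_{hy} = \mathbb{E}\{\mathbf{h}_{i,t}^l \mathbf{y}_{i,t}^H\}$ and $\mathbf{R}_{yy} = \mathbb{E}\{\mathbf{y}_{i,t}\mathbf{y}_{i,t}^H\}$. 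The first concrete step is to reduce the observation using the given expression for $\mathbf{y}_{i,t}$ together with the normalization $\pmb{\psi}_{t}^H \pmb{\psi}_{t} = K_{\max}$, which turns $\mathbf{y}_{i,t}$ into $K_{\max}\sum_{i'\in\mathcal{P}_t}\sqrt{\hat{p}_{i',t}}\,\mathbf{h}_{i',t}^l + \mathbf{N}_l\pmb{\psi}_{t}$.

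Next I would compute the two required second-order statistics, using that the channels $\{\mathbf{h}_{i',t}^l\}_{i'\in\mathcal{P}_t}$ are mutually independent, independent of the noise, and each has covariance $\beta_{i',t}^l\mathbf{I}_M$. For the cross-covariance $\mathbf{R}_{hy}$, only the $i'=i$ term survives (all other channels are independent and zero-mean), giving $\mathbf{R}_{hy}=K_{\max}\sqrt{\hat{p}_{i,t}}\,\beta_{i,t}^l\mathbf{I}_M$. For $\mathbf{R}_{yy}$ the channel terms contribute $K_{\max}^2\sum_{i'\in\mathcal{P}_t}\hat{p}_{i',t}\beta_{i',t}^l\mathbf{I}_M$, while the filtered noise $\mathbf{N}_l\pmb{\psi}_t$ contributes $K_{\max}\sigma_{\mathrm{UL}}^2\mathbf{I}_M$, the extra $K_{\max}$ again arising from $\|\pmb{\psi}_t\|^2$. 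Crucially, every covariance is a scalar multiple of $\mathbf{I}_M$, so $\mathbf{R}_{yy}^{-1}$ is immediate and assembling $\mathbf{R}_{hy}\mathbf{R}_{yy}^{-1}\mathbf{y}_{i,t}$ and cancelling one factor of $K_{\max}$ yields the stated estimator.

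To obtain the distribution in \eqref{eq:ChannelEstDynamicMRC}, I would note that $\hat{\mathbf{h}}_{i,t}^l$ is a fixed linear transformation of the Gaussian vector $\mathbf{y}_{i,t}$, hence itself zero-mean Gaussian, with covariance $\mathbf{R}_{hy}\mathbf{R}_{yy}^{-1}\mathbf{R}_{hy}^H$; substituting the scalar quantities above and simplifying gives the claimed covariance. For the estimation error, I would invoke the orthogonality principle: in the jointly Gaussian case $\mathbf{e}_{i,t}^l=\mathbf{h}_{i,t}^l-\hat{\mathbf{h}}_{i,t}^l$ is zero-mean Gaussian and statistically independent of $\hat{\mathbf{h}}_{i,t}^l$, with covariance $\mathbf{R}_{hh}-\mathbf{R}_{\hat h\hat h}=\beta_{i,t}^l\mathbf{I}_M-\mathbf{R}_{\hat h\hat h}$. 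Reducing these two terms to a common denominator $K_{\max}\sum_{i'\in\mathcal{P}_t}\hat{p}_{i',t}\beta_{i',t}^l+\sigma_{\mathrm{UL}}^2$ produces a numerator in which the $i'=i$ contribution cancels exactly, leaving the sum over $\mathcal{P}_t\setminus\{i\}$ in \eqref{eq:estimationerror}.

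I expect the only real obstacle to be bookkeeping rather than conceptual: tracking the $K_{\max}$ and $K_{\max}^2$ factors produced by the pilot norm in the signal and noise covariances, ensuring they cancel consistently so that the final denominator reads $K_{\max}\sum_{i'\in\mathcal{P}_t}\hat{p}_{i',t}\beta_{i',t}^l+\sigma_{\mathrm{UL}}^2$, and verifying the cancellation of the desired-user term $\hat{p}_{i,t}(\beta_{i,t}^l)^2$ in the error-covariance numerator. The Gaussianity, the isotropy of all covariances, and the orthogonality principle are standard and do the conceptual work here.
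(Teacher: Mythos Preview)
Your proposal is correct and is precisely the standard MMSE derivation that the paper invokes; the paper's own proof is simply a one-line citation to standard MMSE estimation results in \cite{Kay1993a,massivemimobook}, and your write-up spells out exactly those computations. The bookkeeping you flag (the $K_{\max}$ versus $K_{\max}^2$ factors from $\|\pmb{\psi}_t\|^2$ and the cancellation of the $i'=i$ term in the error-covariance numerator) is handled correctly.
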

\begin{proof}
The proof follows directly from standard MMSE estimation techniques \cite{Kay1993a, massivemimobook}. 
\end{proof}
The statistical information in Lemma~\ref{Lemma:DynamicChannelEst} of each channel estimate and estimation error are used to construct the linear combining vectors and to derive a closed-form expression of the uplink~SE. 

\vspace*{-0.25cm}
\subsection{Uplink Data Transmission Phase}
During the uplink data transmission phase, every active user~$t \in \mathcal{A}_i$ in cell~$i$ transmits a data symbol $s_{i,t}$ with $\mathbb{E} \{ |s_{i,t}|^2\} =1$. The received signal $\mathbf{y}_{l} \in \mathbb{C}^{M}$ at BS~$l$ is the superposition of signals from all users across cells:
\begin{equation} \label{eq:ReceivedData}
\mathbf{y}_{l} = \sum_{i=1}^L \sum_{t \in \mathcal{A}_i } \sqrt{p_{i,t}} \mathbf{h}_{i,t}^l s_{i,t} + \mathbf{n}_{l},
\end{equation}
where $p_{i,t}$ is the power that active user $t$ in cell $i$ allocates to the data symbol $s_{i,t}$ and $\mathbf{n}_{l} \in \mathbb{C}^{M}$ is complex Gaussian noise distributed as $\mathcal{CN} \left(\mathbf{0}, \sigma_{\mathrm{UL}}^2 \mathbf{I}_{M} \right)$. Each BS uses maximum ratio combining (MRC) to
detect the desired signals from its users. In particular, BS~$l$ selects the combining vector for its user~$k$ as
$\mathbf{v}_{l,k} = \hat{\mathbf{h}}_{l,k}^l,$
and we will quantify the achievable spectral efficiency by using the use-and-then-forget capacity bounding technique \cite{Marzetta2016a}. The closed-form expression of the lower bound on the uplink capacity is shown in Lemma~\ref{LemmaMRC}.
\begin{lemma} \label{LemmaMRC}
If each BS uses MRC for data detection, a closed-form expression for the uplink ergodic SE of active user~$k$ in cell~$l$ is 
\begin{equation} \label{eq:ULRateDynamicMRC}
R_{l,k} \left( \{ \hat{p}_{i,t},p_{i,t} \}\right) = \left(1 - \frac{K_{\max}}{\tau_c} \right) \log_2 \left( 1 + \mathrm{SINR}_{l,k} \right),
\end{equation}
where the effective SINR value of this user is
\begin{equation} \label{eq:DynamicSINRlk}
\mathrm{SINR}_{l,k} = M K_{\max} p_{l,k}  \hat{p}_{l,k} (\beta_{l,k}^l)^2 / \mathit{D}_{l,k}
\end{equation} 
and
\begin{equation} \label{eq:Denominatorlk}
\begin{split}
&\mathit{D}_{l,k} = M K_{\max}  \sum\limits_{i \in \mathcal{P}_k \setminus \{l\}} p_{i,k} \hat{p}_{i,k} (\beta_{i,k}^l)^2  + \\
& \left( K_{\max}  \sum\limits_{i \in \mathcal{P}_k } \hat{p}_{i,k} \beta_{i,k}^l + \sigma_{\mathrm{UL}}^2  \right)\left( \sum\limits_{i=1}^L \sum\limits_{t \in \mathcal{A}_i } p_{i,t} \beta_{i,t}^l + \sigma_{\mathrm{UL}}^2 \right).
\end{split}
\end{equation}
\end{lemma}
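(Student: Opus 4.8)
The plan is to invoke the use-and-then-forget (UatF) capacity bound and then evaluate the handful of expectations it requires, using the channel statistics from Lemma~\ref{Lemma:DynamicChannelEst}. With the MRC vector $\mathbf{v}_{l,k}=\hat{\mathbf{h}}_{l,k}^{l}$, the UatF technique treats the mean effective channel $\mathbb{E}\{\mathbf{v}_{l,k}^{H}\mathbf{h}_{l,k}^{l}\}$ as the only known gain and lumps everything else into uncorrelated effective noise, giving $R_{l,k}\geq(1-K_{\max}/\tau_c)\log_2(1+\mathrm{SINR}_{l,k})$ with
\[
\mathrm{SINR}_{l,k}=\frac{p_{l,k}\,\bigl|\mathbb{E}\{\mathbf{v}_{l,k}^{H}\mathbf{h}_{l,k}^{l}\}\bigr|^{2}}{\displaystyle\sum_{i=1}^{L}\sum_{t\in\mathcal{A}_{i}}p_{i,t}\,\mathbb{E}\{|\mathbf{v}_{l,k}^{H}\mathbf{h}_{i,t}^{l}|^{2}\}-p_{l,k}\,\bigl|\mathbb{E}\{\mathbf{v}_{l,k}^{H}\mathbf{h}_{l,k}^{l}\}\bigr|^{2}+\sigma_{\mathrm{UL}}^{2}\,\mathbb{E}\{\|\mathbf{v}_{l,k}\|^{2}\}}.
\]
The whole task thus reduces to computing $\mathbb{E}\{\mathbf{v}_{l,k}^{H}\mathbf{h}_{l,k}^{l}\}$, $\mathbb{E}\{\|\mathbf{v}_{l,k}\|^{2}\}$, and $\mathbb{E}\{|\mathbf{v}_{l,k}^{H}\mathbf{h}_{i,t}^{l}|^{2}\}$ for every interfering pair $(i,t)$.

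First I would dispose of the easy pieces. Writing $\gamma_{l,k}^{l}$ for the common per-entry variance of $\hat{\mathbf{h}}_{l,k}^{l}$ read off from \eqref{eq:ChannelEstDynamicMRC}, the orthogonality of the MMSE estimate and its error gives $\mathbb{E}\{\mathbf{v}_{l,k}^{H}\mathbf{h}_{l,k}^{l}\}=\mathbb{E}\{\|\hat{\mathbf{h}}_{l,k}^{l}\|^{2}\}=M\gamma_{l,k}^{l}$ and likewise $\mathbb{E}\{\|\mathbf{v}_{l,k}\|^{2}\}=M\gamma_{l,k}^{l}$. Hence the numerator is $p_{l,k}M^{2}(\gamma_{l,k}^{l})^{2}$ and the noise term is $\sigma_{\mathrm{UL}}^{2}M\gamma_{l,k}^{l}$.

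The crux is the interference sum, where I would split the pairs $(i,t)$ into those using a different pilot ($t\neq k$) and those sharing user~$k$'s pilot ($t=k$, i.e.\ $i\in\mathcal{P}_{k}$). For $t\neq k$ the channel $\mathbf{h}_{i,t}^{l}$ is independent of the processed pilot $\mathbf{y}_{l,k}$ and therefore of $\mathbf{v}_{l,k}$, so $\mathbb{E}\{|\mathbf{v}_{l,k}^{H}\mathbf{h}_{i,t}^{l}|^{2}\}=\beta_{i,t}^{l}M\gamma_{l,k}^{l}$ (non-coherent interference only). For $t=k$ the essential structural fact is that every estimate formed at BS~$l$ from the same pilot observation is parallel, $\hat{\mathbf{h}}_{i,k}^{l}=(\beta_{i,k}^{l}\sqrt{\hat{p}_{i,k}}/\beta_{l,k}^{l}\sqrt{\hat{p}_{l,k}})\,\hat{\mathbf{h}}_{l,k}^{l}$, so $\gamma_{i,k}^{l}=(\beta_{i,k}^{l})^{2}\hat{p}_{i,k}/(\beta_{l,k}^{l})^{2}\hat{p}_{l,k}\cdot\gamma_{l,k}^{l}$. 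Decomposing $\mathbf{h}_{i,k}^{l}=\hat{\mathbf{h}}_{i,k}^{l}+\mathbf{e}_{i,k}^{l}$ with the independent error from \eqref{eq:estimationerror}, and using the fourth-moment identity $\mathbb{E}\{\|\hat{\mathbf{h}}_{l,k}^{l}\|^{4}\}=M(M+1)(\gamma_{l,k}^{l})^{2}$ for a circularly-symmetric Gaussian vector, a short computation gives $\mathbb{E}\{|\mathbf{v}_{l,k}^{H}\mathbf{h}_{i,k}^{l}|^{2}\}=M^{2}\gamma_{i,k}^{l}\gamma_{l,k}^{l}+\beta_{i,k}^{l}M\gamma_{l,k}^{l}$. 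This is the step I expect to be the main obstacle: the $M(M+1)$ moment produces both the array-gain-scaled coherent pilot-contamination term $\propto M^{2}$ and a residual $\propto M$ that must be recombined with the error variance through the identity $\gamma_{i,k}^{l}+\mathrm{var}(\mathbf{e}_{i,k}^{l})=\beta_{i,k}^{l}$ to restore the full non-coherent contribution; keeping this bookkeeping exact is where the algebra is delicate.

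Finally I would assemble the denominator. Summing all interference terms and subtracting $p_{l,k}M^{2}(\gamma_{l,k}^{l})^{2}$ cancels precisely the $i=l$ coherent term (its coefficient is $\gamma_{l,k}^{l}/\gamma_{l,k}^{l}=1$), leaving a coherent sum only over $i\in\mathcal{P}_{k}\setminus\{l\}$ together with the aggregate non-coherent term $\bigl(\sum_{i=1}^{L}\sum_{t\in\mathcal{A}_{i}}p_{i,t}\beta_{i,t}^{l}+\sigma_{\mathrm{UL}}^{2}\bigr)M\gamma_{l,k}^{l}$. Dividing numerator and denominator by $M\gamma_{l,k}^{l}$, substituting $\gamma_{l,k}^{l}=K_{\max}(\beta_{l,k}^{l})^{2}\hat{p}_{l,k}/(K_{\max}\sum_{i'\in\mathcal{P}_{k}}\hat{p}_{i',k}\beta_{i',k}^{l}+\sigma_{\mathrm{UL}}^{2})$ (and the analogous $\gamma_{i,k}^{l}$), and then clearing the common denominator reproduces \eqref{eq:DynamicSINRlk}–\eqref{eq:Denominatorlk}. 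The pre-log factor $(1-K_{\max}/\tau_c)$ follows from charging $K_{\max}$ of the $\tau_c$ symbols in each coherence interval to pilot transmission.
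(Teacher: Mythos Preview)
Your proposal is correct and is precisely the standard derivation the paper has in mind: the paper's own proof is merely the one-line remark that the result ``follows along the lines of Corollary~4.5 in \cite{massivemimobook}'' with different pilot/data powers, and what you have written is exactly that computation spelled out (UatF bound, MMSE orthogonality, the parallel-estimate identity for pilot-sharing users, and the $M(M{+}1)$ fourth-moment step). There is no methodological difference---you have simply supplied the details the paper delegates to the reference.
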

\begin{proof}
The proof follows along the lines of Corollary~$4.5$ in \cite{massivemimobook} except for the different notation and the fact that each user can assign a different power to its pilot and data.
\end{proof}
The numerator of the SINR expression in \eqref{eq:DynamicSINRlk} indicates contributions of the array gain which is directly proportional to the number of antennas at the serving BS. The first part in the denominator represents the pilot contamination effect and it is also proportional to the number of BS antennas. Interestingly, active user~$k$ in cell~$l$ will have unbounded capacity when $M \rightarrow \infty$ if all users using the same pilot sequence $\pmb{\psi}_k$ are silent (i.e., inactive or allocated zero transmit power), \textcolor{red}{thanks to the massive antenna array gain while inter-cell mutual interference and noise are negligible}. We notice that the remaining terms are non-coherent mutual interference and noise that can have a vanishing impact when the number of antennas grow. Furthermore, the SE of a user is proportional to $(1 - K_{\max}/\tau_c)$, which is the pre-log factor in \eqref{eq:ULRateDynamicMRC}. This is the fraction of symbols per coherence interval that are used for data transmission, which thus reduces when the number of pilots is increased. In the special case of $ |\mathcal{A}_1 | = \ldots =  |\mathcal{A}_L |$, the analytical results in Lemma~\ref{LemmaMRC} particularize to equally-loaded systems as in the previous works. That special case is unlikely to occur in practice since the data traffic is generated independently for each user.
\section{Joint Pilot and Data Power Control for Sum Spectral Efficiency Optimization} \label{Section:JointPilotDataforSumSE}
The main goal of this paper is to solve the sum SE maximization problem. Achieving high SE is important for future networks, and the (weighted) sum SE maximization is also the core problem to be solved in practical algorithms for dynamic resource allocation \cite{Georgiadis2006a}. The previous works \cite{Victor2017a,Chien2018a} consider this problem for single-cell systems with joint pilot and data power control or multi-cell systems with only data power control, respectively. In contrast, we formulate and solve a sum SE maximization problem with joint pilot and data power control. This optimization problem has not been tackled before in the Massive MIMO literature due to its inherent non-convexity structure. In this section, we develop an iterative algorithm that achieves a stationary point in polynomial time by solving a series of convex sub-problems in closed form.

\subsection{Problem Formulation}
The considered optimization problem is to maximize the sum SE of all active users under power constraints on each transmitted symbol:
\begin{equation} \label{Prob:SumRate}
\begin{aligned}
& \underset{\{ \hat{p}_{l,k}, p_{l,k} \} }{\textrm{maximize}}
&&   \sum_{l=1}^{L} \sum_{ k \in \mathcal{A}_l } R_{l,k} \left( \{ \hat{p}_{i,t},p_{i,t} \}\right) \\
& \textrm{subject to}
&&   0 \leq \hat{p}_{l,k} \leq P_{l,k}, \; \forall l,k,\\
&&& 0 \leq p_{l,k} \leq  P_{l,k}, \; \forall l,k,
\end{aligned}
\end{equation}
where $P_{l,k} \geq 0$ is the maximum power that user~$k$ in cell~$l$ can supply to each transmitted symbol. Problem~\eqref{Prob:SumRate} is independent of the small-scale fading, so it allows for long-term performance optimization, if the users are continuously  active and there is no large-scale user mobility. However, in practical systems, some users are moving quickly 
and new scheduling decisions are made every few milliseconds based on the users' traffic. It is therefore important to be able to solve \eqref{Prob:SumRate} very quickly to adapt to these changes.\footnote{Note that the ergodic SE is a reasonable performance metric also in this scenario, since long codewords can span over the frequency domain and the channel hardening makes the channel after MRC almost deterministic. The simulations in \cite{Bjornson2016b} shows that coding over 1\,kB of data is sufficient to operate closely to the ergodic SE.} The sum SE optimization problem is non-convex in general and seeking the optimal solution has very high complexity in any non-trivial setup \cite{Al-Shatri2012a}. However, the pilot and data power constraints in \eqref{Prob:SumRate} guarantee a convex feasible set and make all ergodic SE expressions continuous functions of the power variables. According to Weierstrass’ theorem \cite{Horn2013a}, an optimal solution set of pilot and data powers always exist. The global optimum to problem~\eqref{Prob:SumRate} can be obtained by using the branch-and-bound approach, similar to \cite{joshi2011weighted} for conventional multiple-input single-output channels. However, the complexity of that approach grows exponentially with the number of users, thus, even if off-line optimization is considered, it can only be utilized in small-scale networks with a few users. Moreover, in the settings where it can be utilized, the stationary points obtained by the weighted MMSE methodology was shown in \cite{joshi2011weighted, Christensen2008a} to be close to the global optimum.

Inspired by the weighted MMSE methodology \cite{joshi2011weighted, Christensen2008a}, we will now propose an iterative algorithm to find a stationary point to \eqref{Prob:SumRate}. By removing the pre-log factor and setting $\hat{\rho}_{l,k} = \sqrt{\hat{p}_{l,k} }$ and $\rho_{l,k} = \sqrt{p_{l,k} }$, $\forall l,k,$ as the new optimization variables, we formulate a new problem that is equivalent with \eqref{Prob:SumRate}.
\begin{theorem} \label{Theorem:WMMSEMRC}
The following optimization problem is equivalent to problem \eqref{Prob:SumRate}:
\begin{equation} \label{Prob:WMMSEv1}
\begin{aligned}
& \underset{\substack{ \{ w_{l,k} \geq 0, u_{l,k} \}, \\ \{ \hat{\rho}_{l,k}, \rho_{l,k} \geq 0 \} }}{\mathrm{minimize}}
&&   \sum_{l=1}^{L} \sum_{k \in \mathcal{A}_l } w_{l,k} e_{l,k} - \ln (w_{l,k}) \\
& \,\,\mathrm{subject\,\,to}
&&   \hat{\rho}_{l,k}^2 \leq P_{l,k}, \; \forall l,k,\\
&&& \rho_{l,k}^2 \leq  P_{l,k}, \; \forall l,k,
\end{aligned}
\end{equation}
where 
\begin{equation} \label{Prob:WMMSEv2}
\begin{split}
e_{l,k} =& MK_{\max} u_{l,k}^2 \sum_{i \in \mathcal{P}_k} \rho_{i,k}^2 \hat{\rho}_{i,k}^2 (\beta_{i,k}^l)^2 - 2 \sqrt{M  K_{\max}} \rho_{l,k} \hat{\rho}_{l,k} \times \\
&u_{l,k} \beta_{l,k}^l  + u_{l,k}^2 \left( K_{\max} \sum_{i\in \mathcal{P}_k} \hat{\rho}_{i,k}^2  \beta_{i,k}^l  + \sigma_{\mathrm{UL}}^2 \right)\times \\
& \left( \sum_{i=1}^L \sum_{t \in \mathcal{A}_i } \rho_{i,t}^2 \beta_{i,t}^l + \sigma_{\mathrm{UL}}^2 \right) +1,
\end{split}
\end{equation}
in the sense that if $\{u_{l,k}^{\ast}, w_{l,k}^{\ast}, \hat{\rho}_{l,k}^{\ast}, \rho_{l,k}^{\ast} \}$ is a global optimum to problem \eqref{Prob:WMMSEv1}, then  $\{ (\hat{\rho}_{l,k}^{\ast})^2, (\rho_{l,k}^{\ast})^2 \}$ is a global optimum to problem \eqref{Prob:SumRate}.
\end{theorem}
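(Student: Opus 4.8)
The plan is to follow the weighted MMSE (WMMSE) reformulation strategy, adapted to the joint pilot-and-data power variables. First I would observe that the pre-log factor $\left(1 - K_{\max}/\tau_c\right)$ and the change of logarithm base are strictly positive multiplicative constants, so that maximizing the objective of \eqref{Prob:SumRate} is equivalent to maximizing $\sum_{l=1}^L \sum_{k \in \mathcal{A}_l} \ln\!\left(1 + \mathrm{SINR}_{l,k}\right)$. After the bijective substitution $\hat{\rho}_{l,k} = \sqrt{\hat{p}_{l,k}}$ and $\rho_{l,k} = \sqrt{p_{l,k}}$ on the non-negative orthant (under which the constraints $\hat\rho_{l,k}^2 \le P_{l,k}$ and $\rho_{l,k}^2 \le P_{l,k}$ describe the same feasible set), it remains to show that minimizing the WMMSE objective over the auxiliary variables $\{u_{l,k}\}$ and $\{w_{l,k} \ge 0\}$ reproduces exactly this quantity.

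The core consists of two elementary one-dimensional minimizations performed for each user independently, since the objective of \eqref{Prob:WMMSEv1} is separable across the pairs $(l,k)$. For the weight, I would use the scalar identity that for any $e > 0$ the function $w \mapsto w e - \ln(w)$ is strictly convex with a unique minimizer $w^\ast = 1/e$, giving $\min_{w>0}\left( w e - \ln w\right) = 1 + \ln e$. For the receiver gain, I would treat $e_{l,k}$ in \eqref{Prob:WMMSEv2} as a quadratic in $u_{l,k}$ of the form $e_{l,k} = a_{l,k} u_{l,k}^2 - 2 b_{l,k} u_{l,k} + 1$, where $b_{l,k} = \sqrt{M K_{\max}}\,\rho_{l,k}\hat{\rho}_{l,k}\beta_{l,k}^l$ and $a_{l,k}$ collects the remaining (positive) coefficients; completing the square yields the unique minimizer $u_{l,k}^\ast = b_{l,k}/a_{l,k}$ and the minimum value $e_{l,k}^{\min} = 1 - b_{l,k}^2/a_{l,k}$.

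The main obstacle, and the only step requiring genuine computation, is verifying that this minimum MSE equals $1/\left(1 + \mathrm{SINR}_{l,k}\right)$ with exactly the SINR of Lemma~\ref{LemmaMRC}. I would do this by matching coefficients: squaring $b_{l,k}$ gives $b_{l,k}^2 = M K_{\max}\, p_{l,k}\hat{p}_{l,k}(\beta_{l,k}^l)^2$, the numerator in \eqref{eq:DynamicSINRlk}; and since the sum $\sum_{i \in \mathcal{P}_k}$ inside $a_{l,k}$ contains the $i=l$ term $M K_{\max}\rho_{l,k}^2\hat\rho_{l,k}^2(\beta_{l,k}^l)^2 = b_{l,k}^2$, the difference $a_{l,k} - b_{l,k}^2$ reduces, after substituting $\rho^2 \to p$ and $\hat\rho^2 \to \hat p$, precisely to the denominator $D_{l,k}$ in \eqref{eq:Denominatorlk}. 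Hence $e_{l,k}^{\min} = (a_{l,k}-b_{l,k}^2)/a_{l,k} = D_{l,k}/(D_{l,k} + b_{l,k}^2) = 1/(1+\mathrm{SINR}_{l,k})$.

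Finally I would assemble the pieces. Minimizing over $u_{l,k}$ first and then over $w_{l,k}$, the inner objective for each user becomes $1 + \ln e_{l,k}^{\min} = 1 - \ln\!\left(1 + \mathrm{SINR}_{l,k}\right)$, so the full WMMSE objective minimized over $\{u_{l,k}, w_{l,k}\}$ equals $\sum_{l,k}\left(1 - \ln(1+\mathrm{SINR}_{l,k})\right)$, a constant minus the transformed sum-SE objective. Because these inner minimizations are attained and depend on $\{\hat\rho_{l,k},\rho_{l,k}\}$ only through $\mathrm{SINR}_{l,k}$, minimizing \eqref{Prob:WMMSEv1} jointly is equivalent to maximizing $\sum_{l,k}\ln(1+\mathrm{SINR}_{l,k})$ over the same feasible powers. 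Thus any global minimizer $\{u_{l,k}^\ast, w_{l,k}^\ast, \hat\rho_{l,k}^\ast, \rho_{l,k}^\ast\}$ of \eqref{Prob:WMMSEv1} has $\{\hat\rho_{l,k}^\ast, \rho_{l,k}^\ast\}$ maximizing the sum SE, and squaring recovers a global optimum $\{(\hat\rho_{l,k}^\ast)^2, (\rho_{l,k}^\ast)^2\}$ of \eqref{Prob:SumRate}, as claimed.
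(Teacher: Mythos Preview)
Your proposal is correct and follows essentially the same route as the paper: optimize out $u_{l,k}$ (quadratic minimization) and $w_{l,k}$ (the identity $\min_{w>0} we-\ln w = 1+\ln e$), verify that $e_{l,k}^{\min}=1/(1+\mathrm{SINR}_{l,k})$, and conclude that the reduced objective is an affine transformation of $\sum_{l,k}\ln(1+\mathrm{SINR}_{l,k})$. The only cosmetic difference is that the paper first motivates the specific form of $e_{l,k}$ by introducing a virtual scalar channel $\tilde y_{l,k}=\sqrt{MK_{\max}}\rho_{l,k}\hat\rho_{l,k}\beta_{l,k}^l x_{l,k}+w_{l,k}$ with noise variance $D_{l,k}$ and interpreting $u_{l,k}$ as a linear receiver, whereas you take $e_{l,k}$ as given and verify the identity algebraically; since $e_{l,k}$ is part of the theorem statement, your shortcut is entirely legitimate.
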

\begin{proof}
The proof consists of two main steps: the mean square error $e_{l,k}$ is first formulated by considering a single-input single-output (SISO) communication system with deterministic channels having the same SE as in Lemma~\ref{LemmaMRC}, where $u_{l,k}$ is the beamforming coefficient utilized in such a SISO system and $w_{l,k}$ is the weight value in the receiver. After that, the equivalence of two problems \eqref{Prob:SumRate} and \eqref{Prob:WMMSEv1} is obtained by finding the optimal solution of $u_{l,k}$ and $w_{l,k},\forall l,k,$ given the other optimization variables. The detailed proof is given in Appendix~\ref{Appendix:WMMSEMRC}.
\end{proof}
The new problem formulation in Theorem~\ref{Theorem:WMMSEMRC} is still non-convex, but it has an important desired property: if we consider one of the sets $\{ u_{l,k}\}$, $\{ w_{l,k}\}$, $\{ \hat{\rho}_{l,k}\}$, and $\{ \rho_{l,k} \}$ as the only optimization variables, while the other variables are constant, then problem \eqref{Prob:WMMSEv2} is convex. Note that the set of optimization variables and SE expressions are different than in the previous works \cite{Shi2011, Weeraddana2012a} that followed similar paths of reformulating their sum SE problems, which is why Theorem~\ref{Theorem:WMMSEMRC} is a main contribution of this paper. From the closed-form SE expression, both channel estimation errors and pilot contamination effects are considered. In particular, we are optimizing both pilot and data powers, which is not the case in prior work. Moreover, in our case we can get closed-form solutions in each iteration, leading to a particularly efficient implementation. We exploit this property to derive an iterative algorithm to find a local optimum (stationary point) to \eqref{Prob:WMMSEv2} as shown in the following subsection.
\subsection{Iterative Algorithm}
	\begin{algorithm}[t]
	\caption{Alternating optimization approach for \eqref{Prob:WMMSEv1}} \label{Algorithm:AlternatingApproach}
	\textbf{Input}:  Large-scale fading $\beta_{i,t}^{l}, \forall, i,t,l$; Maximum power levels $P_{l,k}, \forall l,k$; Initial values $\hat{\rho}_{l,k}^{(0)}$ and $\rho_{l,k}^{(0)}, \forall l,k$.  Set up $n=0$. \\
	\textbf{While} \textit{Stopping criterion \eqref{eq:Stoping} is not satisfied} \textbf{do}
		\begin{itemize}
			\item[1.] Set $n = n+1$.
			\item[2.] Update $u_{l,k}^{(n)}$, for all $l,k,$ by using \eqref{eq:ulkDynamicMRC} where every $\tilde{u}_{l,k}^{(n-1)}$ is computed by \eqref{eq:tildeulkDynamicMRC}.
			\item[3.] Update $w_{l,k}^{(n)}$, for all $l,k,$ by using \eqref{eq:wlkDynamicMRC} where every $e_{l,k}^{(n)}$ is computed by \eqref{eq:elknDynamicMRC}.
			\item[4.] Update $\hat{\rho}_{l,k}^{(n)}$, for all $l,k,$ by using \eqref{eq:dynamicrhohatlk}.
			\item[5.] Update $\rho_{l,k}^{(n)}$, for all $l,k,$ by using \eqref{eq:dynamicrholk}.
		    \item[6.] Store the currently solution: $\hat{\rho}_{l,k}^{(n)}$ and $\rho_{l,k}^{(n)}$, $\forall l,k$. 
		\end{itemize}
	 \textbf{End While} \\
	\textbf{Output}: The stationary point: $\hat{\rho}_{l,k}^{\mathrm{opt}} = \rho_{l,k}^{(n)}, \rho_{l,k}^{\mathrm{opt}}= \rho_{l,k}^{(n)} \forall l,k.$
\end{algorithm}
This subsection provides an iterative algorithm to obtain a stationary point to problem \eqref{Prob:WMMSEv1} by alternating between updating the different sets of optimization variables. This procedure is established by the following theorem.

\begin{theorem} \label{Theorem:IterativeAl}
	From an initial point $\{ \hat{\rho}_{l,k}^{(0)}, \rho_{l,k}^{(0)} \}$ satisfying the constraints, a stationary point to problem~\eqref{Prob:WMMSEv1} is obtained by updating $\{ u_{l,k}, w_{l,k}, \hat{\rho}_{l,k}, \rho_{l,k} \}$ in an iterative manner. At iteration~$n$, the variables 
	 are updated as follows:
\begin{itemize}[leftmargin=*]
	\item The $u_{l,k}$ variables, for all $l,k \in \mathcal{A}_l,$ are updated as
	\begin{equation}  \label{eq:ulkDynamicMRC}
	u_{l,k}^{(n)} = \sqrt{M  K_{\max}} \rho_{l,k}^{(n-1)} \hat{\rho}_{l,k}^{(n-1)} \beta_{l,k}^l / \tilde{u}_{l,k}^{(n-1)},
	\end{equation}
	where 
	\begin{equation} \label{eq:tildeulkDynamicMRC}
	\begin{split}
	&\tilde{u}_{l,k}^{(n-1)} = M K_{\max} \sum\limits_{i \in \mathcal{P}_k} (\rho_{i,k}^{(n-1)})^2 (\hat{\rho}_{i,k}^{(n-1)})^2  (\beta_{i,k}^l)^2 +\\
	&
	\left( K_{\max} \sum\limits_{i \in \mathcal{P}_k} (\hat{\rho}_{i,k}^{(n-1)})^2   \beta_{i,k}^l + \sigma_{\mathrm{UL}}^2  \right)\left( \sum\limits_{i=1}^L \sum_{t \in \mathcal{A}_i } (\rho_{i,t}^{(n-1)})^2 \beta_{i,t}^l + \sigma_{\mathrm{UL}}^2 \right).
	\end{split}
	\end{equation}
	\item The variables $w_{l,k}$, for all $l,k \in \mathcal{A}_l,$ are updated as
	\begin{equation} \label{eq:wlkDynamicMRC}
	w_{l,k}^{(n)} = 1/e_{l,k}^{(n)},
	\end{equation}
	where 
	\begin{equation} \label{eq:elknDynamicMRC}
	e_{l,k}^{(n)} = (u_{l,k}^{(n)})^2 \tilde{u}_{l,k}^{(n-1)}  - 2 \sqrt{M  K_{\max}} \rho_{l,k}^{(n-1)} \hat{\rho}_{l,k}^{(n-1)} u_{l,k}^{(n)} \beta_{l,k}^l +1.
	\end{equation}
	\item The variables $\hat{\rho}_{l,k}$, for all $l,k \in \mathcal{A}_l,$  are updated as in \eqref{eq:dynamicrhohatlk}:
	\begin{equation} \label{eq:dynamicrhohatlk}
	\hat{\rho}_{l,k}^{(n)} = \min \left( \frac{ \sqrt{M  K_{\max}} \rho_{l,k}^{(n-1)} u_{l,k}^{(n)} w_{l,k}^{(n)} \beta_{l,k}^l}{ \hat{\eta}_{l,k}^{(n)}} , \sqrt{P_{l,k}}\right),
	\end{equation} 
	where $\hat{\eta}_{l,k}^{(n)}$ is given by
	\begin{equation}
	\begin{split}
	\hat{\eta}_{l,k}^{(n)} =& (\rho_{l,k}^{(n-1)})^2 M K_{\max} \sum\limits_{i \in \mathcal{P}_k} w_{i,k}^{(n)} (u_{i,k}^{(n)})^2 (\beta_{l,k}^i)^2  + \\
	& K_{\max} \sum\limits_{j \in \mathcal{P}_k} w_{j,k}^{(n)}  (u_{j,k}^{(n)})^2 \beta_{l,k}^j \left( \sum\limits_{i=1}^L \sum\limits_{t \in \mathcal{A}_i } (\rho_{i,t}^{(n)})^2 \beta_{i,t}^j + \sigma_{\mathrm{UL}}^2 \right).
	\end{split}
	\end{equation}
	\item The variables $\rho_{l,k}$, for all $l,k \in \mathcal{A}_l,$ are updated as in \eqref{eq:dynamicrholk}:
	\begin{equation} \label{eq:dynamicrholk}
	\begin{split}
	\rho_{l,k}^{(n)} = \min \left( \frac{\sqrt{MK_{\max}} \hat{\rho}_{l,k}^{(n)} u_{l,k}^{(n)} w_{l,k}^{(n)} \beta_{l,k}^l }{ \eta_{l,k}^{(n)}  }, \sqrt{P_{l,k}}\right),
	\end{split}
	\end{equation} 
	where $\eta_{l,k}^{(n)}$ is given by
	\begin{equation}
	\begin{split}
	\eta_{l,k}^{(n)} =& (\hat{\rho}_{l,k}^{(n)})^2 M K_{\max}  \sum\limits_{i \in \mathcal{P}_k} w_{i,k}^{(n)} (u_{i,k})^2 (\beta_{l,k}^i)^2 +\\
	& \sum\limits_{i=1}^L \sum\limits_{t \in \mathcal{A}_i} w_{i,t}^{(n)} (u_{i,t}^{(n)})^2 \beta_{l,k}^i \left( K_{\max} \sum\limits_{j \in \mathcal{P}_k } (\hat{\rho}_{j,t}^{(n)})^2 \beta_{j,t}^i + \sigma_{\mathrm{UL}}^2\right).
	\end{split}
	\end{equation}
\end{itemize}

This iterative process converges to a stationary point $\{ u_{l,k}^{\mathrm{opt}}, w_{l,k}^{\mathrm{opt}}, \hat{\rho}_{l,k}^{\mathrm{opt}}, \rho_{l,k}^{\mathrm{opt}}\}$ to problem \eqref{Prob:WMMSEv1} and  then $\{ (\hat{\rho}_{l,k}^{\mathrm{opt}})^2, (\rho_{l,k}^{\mathrm{opt}})^2 \}$ is also a stationary point to problem \eqref{Prob:SumRate}. 
\end{theorem}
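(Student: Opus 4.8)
The plan is to treat Algorithm~\ref{Algorithm:AlternatingApproach} as a four-block Gauss--Seidel (alternating minimization) scheme on problem \eqref{Prob:WMMSEv1} and to split the proof into two stages: first verify that each update equation is the \emph{exact} minimizer of the objective over its block of variables, then show that the resulting sequence converges to a stationary point. For the block updates I would invoke the element-wise convexity noted after Theorem~\ref{Theorem:WMMSEMRC}: with the other three blocks held fixed, the objective is convex in each of $\{u_{l,k}\}$, $\{w_{l,k}\}$, $\{\hat\rho_{l,k}\}$, $\{\rho_{l,k}\}$ separately, and it further separates across the index $(l,k)$, so every scalar subproblem is solved independently. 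Since $u_{l,k}$ and $w_{l,k}$ are unconstrained (with $w_{l,k}>0$ interior), their updates follow from setting the scalar derivative to zero: $e_{l,k}$ is the strictly convex quadratic $\tilde u_{l,k}u_{l,k}^2 - 2\sqrt{MK_{\max}}\rho_{l,k}\hat\rho_{l,k}\beta_{l,k}^l u_{l,k}+1$ in $u_{l,k}$, giving \eqref{eq:ulkDynamicMRC}, while $w_{l,k}e_{l,k}-\ln w_{l,k}$ is minimized at $w_{l,k}=1/e_{l,k}$, giving \eqref{eq:wlkDynamicMRC}.

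For the power variables the feasible set is the box $[0,\sqrt{P_{l,k}}]$, and the only subtlety is that a single $\hat\rho_{l,k}$ enters the mean-square errors $e_{i,k}$ of \emph{all} pilot-sharing cells $i\in\mathcal{P}_k$ (through the first and third terms of \eqref{Prob:WMMSEv2}), while a single $\rho_{l,k}$ enters the interference term of \emph{every} user's $e_{i,t}$. I would therefore collect all these contributions and show that the objective, viewed as a function of $\hat\rho_{l,k}$ (resp.\ $\rho_{l,k}$) alone, is a strictly convex parabola $\hat\eta_{l,k}\hat\rho_{l,k}^2 - 2(\cdots)\hat\rho_{l,k}+\mathrm{const}$ with strictly positive leading coefficient. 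Minimizing such a parabola over $[0,\sqrt{P_{l,k}}]$ is precisely the projection of the unconstrained vertex onto the interval; as the vertex is non-negative the lower clip is inactive, yielding the $\min(\cdot,\sqrt{P_{l,k}})$ form in \eqref{eq:dynamicrhohatlk} and \eqref{eq:dynamicrholk}. Checking that the aggregated coefficients assemble exactly into $\hat\eta_{l,k}$ and $\eta_{l,k}$ is routine but tedious bookkeeping.

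Because each step is an exact block minimization, the objective of \eqref{Prob:WMMSEv1} is non-increasing along the iterations; it is also bounded below (each $e_{l,k}$ is a genuine MSE lying in $(0,1]$, so $w_{l,k}e_{l,k}-\ln w_{l,k}\ge 1+\ln e_{l,k}$ stays finite), hence the sequence of objective values converges. To promote convergence of the function value to convergence to a stationary point, I would appeal to the standard Gauss--Seidel/block-coordinate-descent result (e.g.\ Bertsekas, \emph{Nonlinear Programming}, Prop.~2.7.1): for a continuously differentiable objective over a Cartesian product of closed convex sets, if the minimizer in each block is \emph{unique}, then every limit point is a stationary point. Verifying this uniqueness hypothesis is the structural crux, and it holds here because each block subproblem is \emph{strictly} convex --- the quadratics have strictly positive curvature $\tilde u_{l,k},\hat\eta_{l,k},\eta_{l,k}>0$ and $w\mapsto we-\ln w$ has second derivative $1/w^2>0$. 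The box constraints confine $\{\hat\rho,\rho\}$ to a compact set, and the closed-form updates then keep $\{u,w\}$ bounded, so a convergent subsequence exists.

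Finally I would transfer the stationarity back to \eqref{Prob:SumRate}. Eliminating $u_{l,k}$ and $w_{l,k}$ in closed form (exactly as in the proof of Theorem~\ref{Theorem:WMMSEMRC}) reduces the objective of \eqref{Prob:WMMSEv1} to $\sum_{l,k}\big(1-\ln(1+\mathrm{SINR}_{l,k})\big)$, an affinely decreasing function of the sum SE. At a fixed point of the algorithm $u_{l,k}$ and $w_{l,k}$ already sit at their optima, so by the envelope theorem their partial derivatives contribute nothing to the total derivative with respect to $(\hat\rho_{l,k},\rho_{l,k})$; the KKT conditions of the reduced problem in the power variables then coincide with those of \eqref{Prob:SumRate}, and a stationary point of one is a stationary point of the other. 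This last step --- upgrading the \emph{global-optimum} equivalence of Theorem~\ref{Theorem:WMMSEMRC} to a \emph{stationary-point} equivalence through the envelope argument --- is where I expect the genuine difficulty to lie; by contrast, the derivation of the five update formulas is essentially mechanical once the element-wise convexity and the separable index structure are exploited.
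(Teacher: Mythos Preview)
Your proposal is correct and follows essentially the same route as the paper's proof: block-wise exact minimization yielding the closed forms, monotone decrease plus boundedness for convergence, and an envelope/chain-rule argument for the stationary-point equivalence between \eqref{Prob:WMMSEv1} and \eqref{Prob:SumRate}. The only cosmetic differences are that the paper packages the power updates via a partial Lagrangian with complementary slackness (rather than your direct projection of the parabola vertex onto $[0,\sqrt{P_{l,k}}]$) and carries out the final equivalence by explicitly computing $\partial\mathcal{L}/\partial\rho_{i,t}=\partial\tilde{\mathcal{L}}/\partial\rho_{i,t}$ after substituting $w_{l,k}^{\mathrm{opt}}=1/e_{l,k}^{\mathrm{opt}}$ and $e_{l,k}^{\mathrm{opt}}=(1+\mathrm{SINR}_{l,k})^{-1}$, which is exactly your envelope step written out.
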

\begin{proof}
The proof derives the closed-form optimal solutions in \eqref{eq:ulkDynamicMRC}--\eqref{eq:dynamicrholk} to each of the optimization variables, when the other are fixed, by taking the first derivative of the partial Lagrangian function of \eqref{Prob:WMMSEv1} and equating it to zero. The fact that problems~\eqref{Prob:SumRate} and \eqref{Prob:WMMSEv1} have the same set of stationary points is further confirmed by the chain rule. The proof is given in Appendix~\ref{Appendix:IterativeAlgMRC}.
\end{proof}
Theorem~\ref{Theorem:IterativeAl} provides an iterative algorithm that obtains a local optimum to \eqref{Prob:SumRate} and \eqref{Prob:WMMSEv1} with relatively low computational complexity because of the closed-form solutions in each iteration. Algorithm~\ref{Algorithm:AlternatingApproach} gives a summary of this iterative process. From any feasible initial set of powers $\{ \hat{\rho}_{l,k}^{(0)}, \rho_{l,k}^{(0)} \}$, in each iteration, we update each optimization variable according to \eqref{eq:ulkDynamicMRC}--\eqref{eq:dynamicrholk}. This iterative process will be terminated when the variation of two consecutive iterations is small. For instance the stopping condition may be defined for a given accuracy $\epsilon > 0$ as
\begin{equation} \label{eq:Stoping}
\left| \sum_{l=1}^L \sum_{k \in \mathcal{A}_l } R_{l,k}^{(n)} - \sum_{l=1}^L \sum_{k \in \mathcal{A}_l } R_{l,k}^{(n-1)} \right| \leq \epsilon.
\end{equation}

By considering the multiplications, divisions, and logarithms as the dominated complexity, the number of arithmetic operations need for Algorithm~\ref{Algorithm:AlternatingApproach} to reach $\epsilon$-accuracy is
\begin{equation}
N_1 \left(8 \sum_{i=1}^L | \mathcal{A}_i| + 4 |\mathcal{P}_k | \sum_{i=1}^L | \mathcal{A}_i|  + 23 |\mathcal{P}_k| + 50 \right) \sum_{i=1}^L | \mathcal{A}_i| ,
\end{equation}
where $N_1$ is the number iterations required for the convergence which depends on the given $\epsilon-$accuracy. From Theorem~\ref{Theorem:IterativeAl}, we further observe the following result which is key to achieve an efficient deep learning solution that can handle varying number of users.

\begin{corollary} \label{corollary:phatlkandplk}
If a user has the large-scale fading coefficient equal to zero, then it will always get zero transmit powers when using the algorithm in Theorem~\ref{Theorem:IterativeAl}. Hence, an equivalent way of managing inactive users is to set their large-scale fading coefficients to zero and use $\mathcal{A}_l=\{ 1, \ldots, K_{\max}\}$.
\end{corollary}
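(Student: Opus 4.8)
The plan is to prove the claim by direct inspection of the closed-form updates in Theorem~\ref{Theorem:IterativeAl}, exploiting the fact that the \emph{desired} large-scale fading coefficient $\beta_{l,k}^l$ appears as a multiplicative factor in the numerator of every update that could make the powers of user~$(l,k)$ positive. Concretely, I would fix an arbitrary iteration index $n \geq 1$ and substitute $\beta_{l,k}^l = 0$ into \eqref{eq:ulkDynamicMRC}. Since the denominator $\tilde{u}_{l,k}^{(n-1)}$ is bounded below by $\sigma_{\mathrm{UL}}^4 > 0$ (its second summand is a product of two factors, each at least $\sigma_{\mathrm{UL}}^2$), the update yields $u_{l,k}^{(n)} = 0$. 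Feeding $u_{l,k}^{(n)} = 0$ into \eqref{eq:elknDynamicMRC} gives $e_{l,k}^{(n)} = 1$, hence $w_{l,k}^{(n)} = 1$ is finite and well defined, so no division problem arises in the weight update.

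Next I would treat the power updates. The numerator of \eqref{eq:dynamicrhohatlk} carries $\beta_{l,k}^l$ (and also the already-vanished $u_{l,k}^{(n)}$) as a factor, and likewise the numerator of \eqref{eq:dynamicrholk}; both therefore vanish, so that $\hat{\rho}_{l,k}^{(n)} = \min(0,\sqrt{P_{l,k}}) = 0$ and $\rho_{l,k}^{(n)} = \min(0,\sqrt{P_{l,k}}) = 0$. Notice that none of these four conclusions uses the values of the previous iterates: because $\beta_{l,k}^l$ enters each numerator directly, the powers of user~$(l,k)$ are forced to zero at \emph{every} iteration $n \geq 1$, independently of the initial point $\{\hat{\rho}_{l,k}^{(0)},\rho_{l,k}^{(0)}\}$. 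In particular the stored output is zero, which establishes the first sentence of the corollary without any induction.

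The one point that needs care --- and the main obstacle --- is the degenerate possibility that the denominators $\hat{\eta}_{l,k}^{(n)}$ or $\eta_{l,k}^{(n)}$ also vanish, producing an indeterminate $0/0$. I would resolve this by returning to the origin of \eqref{eq:dynamicrhohatlk}--\eqref{eq:dynamicrholk}: these formulas are the projections onto $[0,\sqrt{P_{l,k}}]$ of the unconstrained minimizers of the per-coordinate convex quadratics obtained from \eqref{Prob:WMMSEv1}, whose linear coefficient is proportional to $\beta_{l,k}^l$ and whose nonnegative quadratic coefficient is $\hat{\eta}_{l,k}^{(n)}$, respectively $\eta_{l,k}^{(n)}$. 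When $\beta_{l,k}^l = 0$ the linear term disappears, so the quadratic is nondecreasing on $\mathbb{R}_+$ and its minimizer over $[0,\sqrt{P_{l,k}}]$ is $0$, regardless of whether the quadratic coefficient is strictly positive or zero. This confirms the value $0$ even in the degenerate case, so the mechanical substitution and the variational argument agree.

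Finally, for the second sentence I would argue that an inactive user is, by definition, one that transmits no pilot and no data, and hence appears in every SINR expression of Lemma~\ref{LemmaMRC} only through terms multiplied by its powers $\hat{p}_{l,k}$ and $p_{l,k}$. Since the first part forces these to zero whenever $\beta_{l,k}^l = 0$, such a user contributes nothing to the signal, interference, or pilot-contamination terms of any other user, so the system behaves exactly as if it were absent. Consequently one may fix $\mathcal{A}_l = \{1,\ldots,K_{\max}\}$ for all cells and encode activity purely through the large-scale fading coefficients (zero for inactive users), which yields the fixed input/output dimension required by the neural network while leaving the achieved sum SE unchanged.
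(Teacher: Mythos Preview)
Your proposal is correct and follows essentially the same approach as the paper, which simply observes that $\beta_{l,k}^l$ appears as a multiplicative factor in the numerators of \eqref{eq:dynamicrhohatlk} and \eqref{eq:dynamicrholk}. Your version is more careful than the paper's one-line proof: you additionally verify that $\tilde{u}_{l,k}^{(n-1)} \geq \sigma_{\mathrm{UL}}^4 > 0$ so that $u_{l,k}^{(n)}$ is well defined, and you resolve the potential $0/0$ degeneracy in the power updates via the underlying variational interpretation, points the paper leaves implicit.
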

\begin{proof}
\textcolor{red}{The numerators of \eqref{eq:dynamicrhohatlk} and \eqref{eq:dynamicrholk} directly imply that any user with zero large-scale fading ($\beta_{l,k}^l =0$) will be allocated zero pilot power ($\hat{p}_{l,k} =0$) and zero data power ($p_{l,k} =0$).}
\end{proof}

 \textcolor{red}{Note that, notwithstanding Corollary~\ref{corollary:phatlkandplk}, the system may reject some active users that have small but non-zero large-scale fading coefficients since Algorithm~\ref{Algorithm:AlternatingApproach} can assign zero power to these ones---similar to the behavior of standard waterfilling algorithms.} This is a key benefit of sum SE maximization as compared to max-min fairness power control \cite{Marzetta2016a, adhikary2017uplink, Chien2016b, massivemimobook, Luca2018DL,Chien2017b,Ghazanfari2019} 
and maximum product-SINR power control \cite{massivemimobook,Luca2018DL,dao2018disjoint}, which always allocate non-zero power to all users and, therefore, require an additional heuristic user admission control step for selecting which users to drop from service due to their poor channel conditions. If a particular user~$t$ in cell~$i$ is not served this implies that $\hat{p}_{i,t}^{\mathrm{opt}} = 0$ and $p_{i,t}^{\mathrm{opt}} = 0$. Hence, this user is neither transmitting in the pilot nor data phase. Corollary~\ref{corollary:phatlkandplk} will enable us to design a single neural network that can mimic Algorithm~\ref{Algorithm:AlternatingApproach} for any number of active users.

\begin{figure}[t]
	\centering
	\includegraphics[trim=0cm 0cm 0cm 0cm, clip=true, width=3.0in]{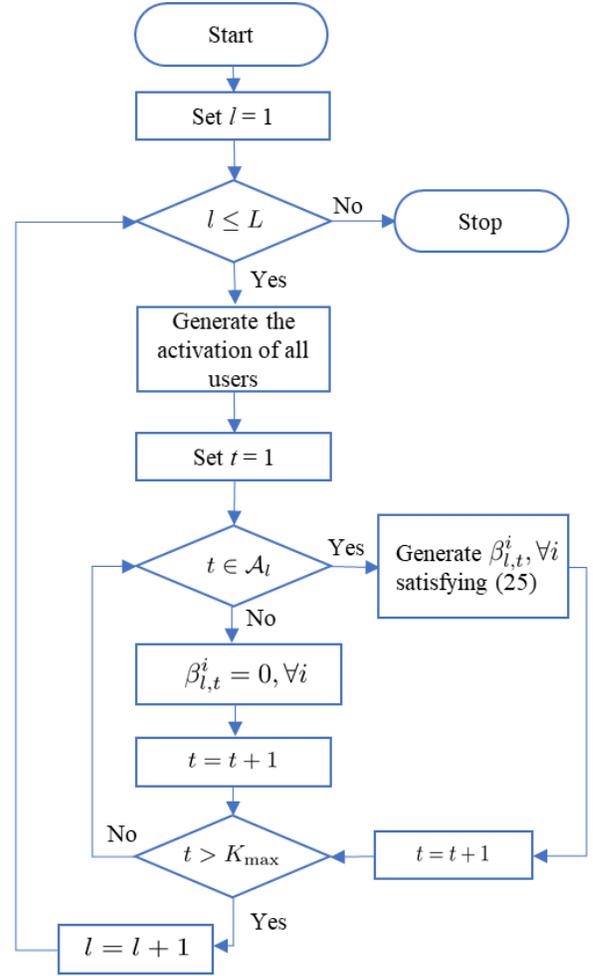} \vspace*{-0.0cm}
	\caption{The flowchart of generating one realization of the Massive MIMO network with $LK_{\max}$ users having random large-scale fading realizations and activity.}
	\label{FigLFSDig}
	\vspace*{-0.2cm}
\end{figure}
\begin{figure}[t]
	\centering
	\begin{overpic}[trim=1.0cm 0.5cm 1.2cm 0.8cm, clip=true, width=3.2in]{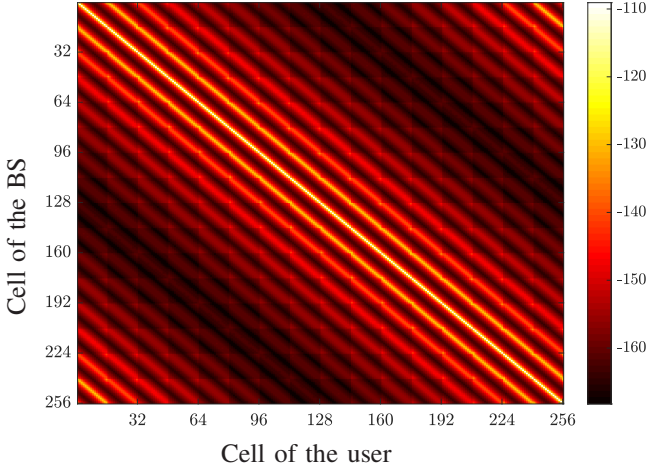}  
		\put (30,-4) {\fontsize{7}{7}{Cell of the user}}
		\put(-5,20){\fontsize{7}{7}{\rotatebox{90}{Cell of the BS }}}
	\end{overpic}  \vspace*{0.3cm}
	\caption{The pattern created by the average large-scale fading coefficients in dB between a user in a given cell and the BS in another cell, a coverage area $25$ km$^2$ with $L=256$ BSs on a $16 \times 16$ grid.}
	\label{FigPatter}
	\vspace*{-0.2cm}
\end{figure}
\begin{figure*}[t]
	\centering
	\includegraphics[trim=0.5cm 6.5cm 22cm 20.5cm, clip=true, width=5.6in]{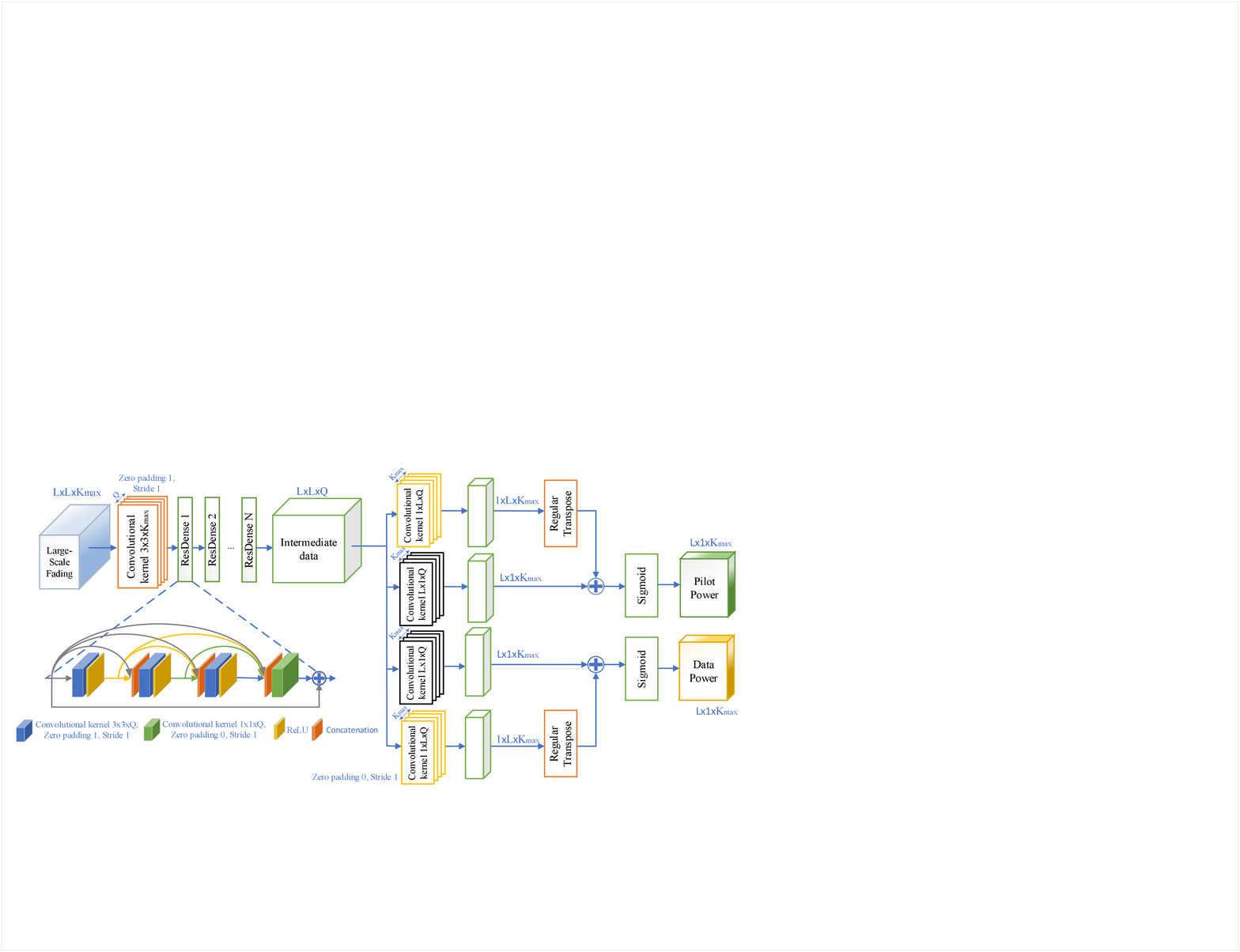} \vspace*{-0.1cm}
	\caption{The proposed PowerNet for the joint pilot and data power control from a given set of large-scale fading coefficients. }
	\label{FigCNN}
	\vspace*{-0.2cm}
\end{figure*}
\section{A low-complexity solution with convolutional neural network} \label{Section:CNN}

 In this section, we introduce a deep learning framework for joint pilot and data power allocation in dynamic cellular Massive MIMO systems, which uses supervised learning to mimic the power control obtained by Algorithm~\ref{Algorithm:AlternatingApproach}. We stress that for non-convex optimization problems, a supervised learning approach with high prediction accuracy is both useful for achieving a low-complexity implementation, harnessing the advances in implementing neural networks on GPUs, and provides a good baseline for further activities, e.g., supervised learning as a warm start for unsupervised learning or to improve the performance of the testing phase \cite{lee2018deep}.
 \subsection{Conditions on Large-Scale Fading Model}
 We first make an explicit assumption on how the large-scale fading coefficients for active users are generated for each realization of the Massive MIMO network.
 
 \begin{assumption} \label{assumption:UserLocation}
 	We consider an $L$-cell system where the activation of each user is determined by an independent Bernoulli distribution with activity probability $p\in [0,1]$. In each realization of the system, $K_{\max}$ i.i.d.~users are generated in each cell~$l$, in which user $t$ is active (i.e., $t \in \mathcal{A}_l$) with the probability $p$. The large-scale fading coefficients associated with an active user in cell $l$ have the probability density function (PDF) $f_l(\pmb{\beta})$, in which $\pmb{\beta} \in [0,1]^L$ and  $[\pmb{\beta}]_l = \max_{i \in \{1, \ldots, L\}} [\pmb{\beta}]_i$, for $l=1,\ldots,L$. 
 	The large-scale fading coefficients $\pmb{\beta}_{l,k} = [\beta_{l,k}^1, \ldots, \beta_{l,k}^L]^T$ of active user, $k \in \mathcal{A}_l$, is obtained as an i.i.d. realization with the PDF $f_l(\cdot)$ that satisfies 
 	\begin{equation} \label{eq:LSFCond}
 	\beta_{l,t}^l = \underset{i \in \{1, \ldots, L\}}{\max} \, \, \beta_{l,t}^i, \forall t \in \mathcal{A}_l,
 	\end{equation}
 	such that it has its strongest channel from the serving BS. Since the activity of the $K_{\max}$ users in cell~$l$ is randomly generated, there are $2^{K_{\max}}$ different possible realizations of the active index user set $\mathcal{A}_l$.
 \end{assumption}
 The process of generating system realizations is illustrated in Fig.~\ref{FigLFSDig}.
 Note that all users in cell $i$ have the same $f_i(\cdot)$, which represents the user distribution over the coverage area of this cell, but this function is different for each cell. For notational convenience, each cell has the same maximum number of users $K_{\max}$ and the activity probability is independent of the cell and location, but these assumptions can be easily generalized.
 
Assumption~\ref{assumption:UserLocation} indicates that a user should be handled equally irrespective of which number that it has in the cell. The fact that all large-scale fading coefficients belong the to compact set $[0,1]$ originates from the law of conservation of energy, and fits well with the structural conditions required to construct a neural networks \cite{hornik1989}. There are many ways to define the PDFs of the large-scale fading coefficients. One option is to match them to channel measurements obtained in a practical setup \cite{gao2013massive}. Another option is to define the BS locations and user distributions and then define a pathloss model with shadow fading. In the numerical part of this paper, we take the latter approach and follow the $3$GPP LTE standard \cite{LTE2010b} that utilizes a Rayleigh-lognormal fading model that matches well to channel measurements in non-line-of-sight conditions. The following model is used for simulations in Section~\ref{Section:NumericalResults}.
 
 \begin{example} \label{Example:LSF} 
 	Consider a setup with $L$ square cells. In each cell, the $K_{\max}$ users are uniformly distributed in the serving cell at distances to the serving BS that are larger than $35$\,m. Each user has the activity probability $p=2/3$. 
 	For an active user~$t$ in cell~$l$, we generate the large-scale fading coefficient to BS~$i$ as
 	\begin{equation} \label{eq:ShadowFading}
 	\beta_{l,t}^i\, \mathrm{[dB]} = -148.1 - 37.6\log_{10} (d_{l,t}^i / 1\,\mathrm{km})  + z_{l,t}^i, 
 	\end{equation}
 	where $d_{l,t}^i$ is the physical distance and $z_{l,t}^i$ is shadow fading that follows a normal distribution with zero mean and standard derivation $7$~dB. If the conditions \eqref{eq:LSFCond} and/or $\beta_{l,t}^i \leq 1$ are not satisfied for a particular user, we simply rerun all the shadow fading realizations for that user.
 \end{example}
 
In order to work with deep neural networks, we now use the channel model in Assumption~\ref{assumption:UserLocation} to express Algorithm~\ref{Algorithm:AlternatingApproach} with the viewpoint of continuous mappings.\footnote{The process $h(\mathbf{x}) = [h_1(\mathbf{x}),\ldots, h_{N_1}(\mathbf{x}) ]$ is a continuous mapping if all $h_n (\mathbf{x}), \forall n \in \{1, \ldots, N_1 \},$ are continuous functions.} Let $\mathcal{F}_{u}, \mathcal{F}_{w},  \mathcal{F}_{p},$ and $\mathcal{F}_{d}$ define the multivariate functions that handle the updates of variables $u_{l,k}, w_{l,k}, \hat{\rho}_{l,k},$ and $\rho_{l,k}, \forall l,k \in \mathcal{A}_l$, which are formally defined as 
 \begin{align}
 \{ u_{l,k}^{(n)} \}_{\forall l }^{k \in \mathcal{A}_l} &= \mathcal{F}_{u} \left(\{ \pmb{\psi}_{k} ,   \beta_{l,k}^{i} ,  \hat{\rho}_{l,k}^{(n-1)} , \rho_{l,k}^{(n-1)} \}_{\forall l,i }^{k \in \mathcal{A}_l} \right), \label{eq:Umap}\\
 \{w_{l,k}^{(n)}\}_{\forall l }^{k \in \mathcal{A}_l} &= \mathcal{F}_{w} \left( \{ u_{l,k}^{(n)} , \pmb{\psi}_k  ,  \beta_{l,k}^{i} , \hat{\rho}_{l,k}^{(n-1)} , \rho_{l,k}^{(n-1)} \}_{\forall l,i }^{k \in \mathcal{A}_l} \right), \\
 \{ \hat{\rho}_{l,k}^{(n)} \}_{\forall l }^{k \in \mathcal{A}_l} &= \mathcal{F}_{p} \left(\{ u_{l,k}^{(n)}  , w_{l,k}^{(n)},  \pmb{\psi}_k ,  \beta_{l,k}^{i} , \rho_{l,k}^{(n-1)} \}_{\forall l,i }^{k \in \mathcal{A}_l} \right),\\
 \{ \rho_{l,k}^{(n)} \}_{\forall l }^{k \in \mathcal{A}_l} &= \mathcal{F}_{d} \left(\{ u_{l,k}^{(n)}  ,w_{l,k}^{(n)} ,  \pmb{\psi}_k ,  \beta_{l,k}^{i}, \hat{\rho}_{l,k}^{(n)} \}_{\forall l,i }^{k \in \mathcal{A}_l} \right) \label{eq:Odmap},
 \end{align}
 The continuous mapping describing Algorithm~\ref{Algorithm:AlternatingApproach} at iteration~$n$ is now formulated in a compact form as 
 \begin{equation} \label{eq:ContMap}
 \begin{split}
& \{ \hat{\rho}_{l,k}^{(n)} , \rho_{l,k}^{(n)}  \}_{\forall l }^{k \in \mathcal{A}_l}  =  \mathcal{F}^{(n)} \left( \{ \pmb{\psi}_k , \beta_{l,k}^{i} , \hat{\rho}_{l,k}^{(n-1)}, \rho_{l,k}^{(n-1)} \}_{\forall l,i }^{k \in \mathcal{A}_l} \right) \\
&= \mathcal{F}^{(n)} \Big(\mathcal{F}^{(n-1)} \Big( \ldots \mathcal{F}^{(1)} \Big( \{ \hat{\rho}_{l,k}^{(0)} , \rho_{l,k}^{(0)} , \pmb{\psi}_k ,  \beta_{l,k}^{i} \}_{\forall l,i }^{k \in \mathcal{A}_l} \Big), \\
& \qquad \{ \pmb{\psi}_k , \beta_{l,k}^{i} \}_{\forall l,i }^{k \in \mathcal{A}_l} \Big), \{ \pmb{\psi}_k ,  \beta_{l,k}^{i} \}_{\forall l,i}^{k \in \mathcal{A}_l} \Big),
\end{split}
 \end{equation}
 where $\mathcal{F}^{(n)}$ is the continuous mapping representing for the whole process in \eqref{eq:Umap}--\eqref{eq:Odmap}, which is defined as $\mathcal{F}^{(n)} (\{ \pmb{\psi}_k ,  \beta_{l,k}^{i} , \hat{\rho}_{l,k}^{(n-1)}  , \rho_{l,k}^{(n-1)} \}_{\forall l,i }^{k \in \mathcal{A}_l} ) = \mathcal{F}_{d} \circ \mathcal{F}_{p} \circ \mathcal{F}_{w} \circ \mathcal{F}_{u} (\{ \pmb{\psi}_k , \beta_{l,k}^{i}, \hat{\rho}_{l,k}^{(n-1)} , \rho_{l,k}^{(n-1)} \}_{\forall l,i }^{k \in \mathcal{A}_l} ),$
 with $\circ$ being the composition operator. By assuming that Algorithm~\ref{Algorithm:AlternatingApproach} converges at iteration~$n$, i.e., $ \hat{\rho}_{l,k}^{\mathrm{opt}} =  \hat{\rho}_{l,k}^{(n)}, \rho_{l,k}^{\mathrm{opt}} =  \rho_{l,k}^{(n)}, \forall l, k \in \mathcal{A}_l,$ then the end-to-end continuous mapping for a given set of active users standing for the iterative processing of this algorithm is 
 \begin{equation} \label{eq:Mappingv1}
 \mathcal{F}\left(\{ \pmb{\psi}_k , \beta_{l,k}^{i} ,  \rho_{l,k}^{(0)} ,  \hat{\rho}_{l,k}^{(0)} \}_{\forall l,i }^{k \in \mathcal{A}_l} \right) = \left\{ \hat{\rho}_{l,k}^{\mathrm{opt}} ,  \rho_{l,k}^{\mathrm{opt}} \right\}_{\forall l,i }^{k \in \mathcal{A}_l}, 
 \end{equation}
 where 
 \begin{equation}
 \begin{split}
 &\mathcal{F} \left(\{ \pmb{\psi}_k , \beta_{l,k}^{i}, \rho_{l,k}^{(0)} , \hat{\rho}_{l,k}^{(0)} \}_{\forall l,i }^{k \in \mathcal{A}_l} \right) =\\
 & \mathcal{F}^{(n)} \left(\{ \pmb{\psi}_k ,  \beta_{l,k}^{i} , \rho_{l,k}^{(n-1)}  ,  \hat{\rho}_{l,k}^{(n-1)} \}_{\forall l,i }^{k \in \mathcal{A}_l} \right)
 \end{split}
\end{equation}
that shows a stationary point obtained from the input set of large-scale fading together with an initial set of powers and the pilot assignment after $n$ iterations. We stress that the input of the continuous mapping \eqref{eq:Mappingv1} varies as a function of the user activity per cell, i.e. $\mathcal{A}_l, \forall l$, thus there are many different power control functions that would need to be approximated with individual neural networks if one follows the standard approach taken in previous work \cite{Luca2018DL, sun2018learning}. A single neural network cannot have a varying number of inputs and outputs, so we would have to design a separate neural network for each value of $\mathcal{A}_l$. If the methods in previous works \cite{Luca2018DL, sun2018learning} are applied for the considered scenario in this paper, BS~$l$ must construct and train $2^{K_{\max}}$ separate neural networks to learn all subset relations created by $\mathcal{A}_l$ as a consequence of Assumption~\ref{assumption:UserLocation}. For a cellular network comprising $L$ cells, we will need to train and store $L2^{K_{\max}}$ different neural networks, which is a big number (up to $9126$ neural networks in the simulation part with $L= 9$ cells serving $K_{\max} = 10$ users). If we had to design all such specific neural networks, the solution is practically useless.

\subsection{Existence of a Single Neural Network for Joint Pilot and Data Power Control}
A main contribution of our framework is that we can exploit the structure of the optimization problem to build a single neural network that can handle the activity/inactivity pattern and has a unified structure for all training samples. We exploit Corollary~\ref{corollary:phatlkandplk} to design such a single neural network that approximates the whole family of functions in \eqref{eq:Mappingv1}. Recall that this corollary says that users with zero-valued large-scale fading coefficients are assigned zero power, without affecting the other users in the network. Corollary~\ref{corollary:phatlkandplk} is indeed used to remove the number of active users per cell and the pilot assignment as input variables. These variables are instead implicitly represented by the large-scale fading values, which are now vectors of a predefined size where non-existing users are represented by zeros, i.e., $\pmb{\beta}_{l,k} = [\beta_{l,k}^1, \ldots, \beta_{l,k}^L]^T = \mathbf{0}, \forall k \notin \mathcal{A}_l$. After that, we define a tensor $\mathsf{I} \in \mathbb{R}_{+}^{L \times L \times K_{\max}}$ containing all the large-scale fading coefficients, $\mathsf{O}_d^{\mathrm{opt}} \in \mathbb{R}_{+}^{L \times 1 \times K_{\max}}$ containing the optimized data powers, and $\mathsf{O}_p^{\mathrm{opt}} \in \mathbb{R}_{+}^{L \times 1 \times K_{\max}}$ containing the optimized pilot powers. Following similar steps as presented in \eqref{eq:Umap}--\eqref{eq:Mappingv1} and treating the starting point as the predetermined number, PowerNet will learn the unique continuous mapping
\begin{equation} \label{eq:Mapping}
\mathcal{F}(\mathsf{I}) = \left\{ \mathsf{O}_d^{\mathrm{opt}} , \mathsf{O}_p^{\mathrm{opt}} \right\}.
\end{equation}
The key distinction of the continuous mapping \eqref{eq:Mapping} compared with one in \eqref{eq:Mappingv1} is the fixed number of inputs and outputs, irrespective of the user activity. This allows a single deep neural network to handle the variation in user activity. We can view the solution to a power control optimization problem as a power control mapping. It is well known that deep learning can be used to approximate such a mapping, thanks to the universal approximation theorem \cite{hornik1989}. The input to the proposed feed-forward neural network is only the large-scale fading coefficients and the output is the data and pilot powers. This makes PowerNet more applicable for real-time communication systems than the previous works \cite{zappone2018model, sun2018learning}, which rely on perfect instantaneous CSI (i.e., small-scale fading) to predict the data power allocation, and neglected channel estimation and pilot contamination effects.
\begin{remark} \label{Remark:ProblemStructure}
The zero-insertion trick that was used to achieve~\eqref{eq:Mapping} works well thanks to the structure of our sum SE optimization problem as stated in Corollary~\ref{corollary:phatlkandplk}. The same trick cannot be applied along with previous works, for example \cite{Luca2018DL}, when using the maximum product-SINR or max-min fairness power control as the performance metric since then everyone would receive zero powers.
\end{remark}

\vspace*{-0.25cm}
\subsection{Convolutional Neural Network Architecture with Residual Dense Connections} 
Among all neural network structures in the literature, CNN is a popular family that achieves higher performance than fully-connected deep neural network for many applications \cite{sainath2013, zhang2017beyond}. One main reason reported in \cite{sainath2013} is that CNN effectively deduces the spectral variation existing in a dataset. In order to demonstrate why the use of CNN is suitable for power control in Massive MIMO, let us consider an area of $5 \times 5$~km with $L = 256$ square cells, each serving $K_{\max} = 10$ users. The large-scale fading coefficients are generated as in Example~\ref{Example:LSF}, but all users are assumed to be in active mode. The interference in a real cellular system is imitated by wrap-around. We gather all the large-scale fading coefficients in a tensor of size $L \times L \times K_{\max}$. The main reason for using a CNN is that the third dimension contains  $K_{\max}$ users that have i.i.d.~coefficients, which are processed identically when using convolutional kernels. An additional reason is that the geometry of the BS deployment can be learned and utilized. For visualization, we map this tensor to a matrix $\mathbf{Z}$ of size $L \times L$ by averaging over the third dimension (user dimension) and plot the result in Fig.~\ref{FigPatter}. The number of horizontal and vertical elements is equal to $L$.

The color map in Fig.~\ref{FigPatter} represents the large-scale fading coefficients. For example, the color of the square $(l,j)$ represents the average large-scale fading coefficient from a user in cell~$l$ to BS~$j$.\footnote{This figure considers a symmetric network, where the BSs are deployed on a square grid, to make the pattern clearly visible to the reader. We consider a network with the same shape in the simulation part but with fewer BSs. We have observed that a CNN can identify patterns in symmetric deployments since the BS locations follow a strict symmetric pattern. The design of a neural network that is capable of identifying similar patterns in networks with asymmetric BS locations is left for the future.}
 Since there is a grid of $16 \times 16$ cells, and the cells are numbered row by row, the large-scale fading coefficients have a certain pattern determined by the fixed cell geometry. Users in neighboring cells have larger large-scale fading coefficients than cells that are further away. The strong intensity around the main diagonal represent the cell itself, while the sub-diagonals with strong intensities represent interference between neighboring cells. The strong intensities in the lower-left and upper-right corners are due to the wrap-around topology. In summary, we make the following observations:
\begin{itemize}
\item The elements of the $L \times L \times K_{\max}$ tensor with large-scale coefficients have a particular relation determined by the geometry of the cellular network. The neural network must learn these relationship and the neural network structure should be selected to facilitate this learning. A CNN is appropriate since the tensor can be directly used as input.\footnote{According to the universal approximation theorem, it is possible to design a fully-connected neural network (FDNN) to achieve the same prediction accuracy as any other neural network. However, we stress that the tensor with large-scale fading coefficients must be turned into a vector in that case. The network will be overparameterized since many weights and biases will do the same thing (e.g., treated users in the same cell) and is becomes harder for the neural network to identify the structure that we know exist in the input data. This intuition is confirmed by Figs.~\ref{FigSEperCellL4K10Ber2p3} and \ref{FigPerUserSEL4K10Ber2p3}, where we compare a CNN with an FDNN.}

\item Each of the $K_{\max}$ users in a cell have large-scale fading coefficients generated independently from the same distribution, thus these inputs should be treated equally.
It is therefore beneficial to have a set convolutional kernels that are reused among the users in a cell, instead of an overparameterized fully-connected network where we know that some of the weights and biases should be the same when we have trained the network properly.

\item Since the cell geometry and BS deployment is fixed in practical deployment, the CNN can learn recurrent spatial patterns of the type shown in Fig.~\ref{FigPatter}. In practice, the patterns might not be as visible to the human eye due to asymmetric cell deployment, but machine learning can anyway identify and exploit it to enable reuse of a set convolutional kernels that are reused in multiple cells, thereby reducing the number of trainable parameters.

\end{itemize}
This paper designs a CNN (illustrated in Fig.~\ref{FigCNN}) which is able to obtain performance close to the stationary point produced by the iterative algorithm in Theorem~\ref{Theorem:IterativeAl}. This CNN only uses the large-scale fading coefficients gathered in an $L \times L \times K_{\max}$  tensor as input, which reduces significantly the signaling overhead compared to using instantaneous channels  as in \cite{van2018distributed}, and reduces the number of trainable parameters by exploiting the pattern created by the large-scale fading coefficients. We will further adopt the state-of-the-art residual dense block (ResDense) \cite{zhang2018a} which consists of densely connected convolutions \cite{huang2017} with the residual learning \cite{he2016deep}. The residual dense connections, which are demonstrated by concatenations, prevent the vanishing gradient problem that may often happen when some large-scale fading coefficients are small. It also prevents the overfitting problem by using many ResDense blocks to deploy a very deep neural network. Compared with the original ResDense in \cite{zhang2018a}, we use additional (rectified linear unit) ReLU activation unit, $\xi(x) = \max(0,x)$, after the residual connection since our mapping only concentrates on non-negative values. We notice that the proposed network might have more parameters than actually needed since our main goal is to provide a proof-of-concept. The optimal network with the lowest number of parameters is different for every propagation environment and therefore not considered in this work, which focuses on the general properties and not the fine-tuning.

\subsubsection{The forward propagation} From the large-scale fading tensor $\mathsf{I}$, the first component of the forward propagation is the convolutional layer 
\begin{equation} \label{eq:1stFeatureMap}
\mathsf{X}_1^{(m)} = \mathcal{H}_1 \left(\mathsf{I}, \{ \mathsf{W}_{1, j}^{(m-1)}, b_{1, j}^{(m-1)} \}_{j=1}^Q \right),
\end{equation}
where $m$ is the epoch index. The operator $\mathcal{H}_1(\cdot, \cdot)$ denotes a series of $Q$ convolutions \cite{chen2016deep}, each using a kernel $\mathsf{W}_{1, j}^{(m-1)}  \in \mathbb{R}^{3\times 3\times K_{\max}}$ and a bias $b_{1, j}^{(m-1)} \in \mathbb{R}$ to extract large-scale fading features of the input tensor $\mathsf{I}$. 
All convolutions apply stride~$1$ and zero padding $1$ to guarantee the same height and width between the inputs and outputs.  After the first layer in \eqref{eq:1stFeatureMap}, the feature map is a tensor with the size $L\times L \times Q$. PowerNet is constructed from $N$ sequential connected ResDense blocks to extract special features of large-scale fading coefficients.  Each ResDense block uses the four sets of convolutional kernels to extract better propagation features. The first convolution begins with $\mathsf{X}_{2,1,4}^{(m)} = \mathsf{X}_1^{(m)}$, then the output signal at each of the $n$-th ResDense block is simultaneously computed by adapting the residual dense connection in \cite{he2016deep} to PowerNet as:
\begin{align}
\mathsf{X}_{2,n,i}^{(m)} &= \xi \left(\mathcal{H}_{2, i} \left( [\mathsf{X}_{2,n-1,4}^{(m)}, \mathsf{X}_{2,n,1}^{(m)}, \ldots, \mathsf{X}_{2,n,i-1}^{(m)} ], \{ \mathsf{W}_{2, i, j}^{(m-1)} \}_{j=1}^Q \right) \right), \label{eq:Resden1} \\
\mathsf{X}_{2,n,4}^{(m)} &= \mathcal{H}_{2, 4} \left([\mathsf{X}_{2,n-1,4}^{(m)}, \mathsf{X}_{2,n,1}^{(m)}, \mathsf{X}_{2,n,2}^{(m)}, \mathsf{X}_{2,n,3}^{(m)} ], \{ \mathsf{W}_{2, 4, j}^{(m-1)} \}_{j=1}^Q \right),\label{eq:Resden4}
\end{align}
where $i \in \{1,2,3 \}$ in \eqref{eq:Resden1}. Each operator $\mathcal{H}_{2,j} (\cdot, \cdot), j \in \{1,\ldots,4\},$ denotes a series of the $Q$ convolutions. In the three first modules, the kernels are $\mathsf{W}_{2, 1, j}^{(m-1)} \in \mathbb{R}^{3\times 3 \times Q}$, $\mathsf{W}_{2, 2, j}^{(m-1)} \in \mathbb{R}^{2 \times 3\times 3 \times Q}$, $\mathsf{W}_{2, 3, j}^{(m-1)} \in \mathbb{R}^{3 \times 3\times 3 \times Q}$, while the remaining has $\mathsf{W}_{2, 4, j}^{(m-1)} \in \mathbb{R}^{4 \times 1\times 1 \times Q}$. In the first three modules, the ReLU activation function $\xi(x)$ is used for each element. 

Since the input and output size of the neural network are different, multiple $1$D convolutions are used to make the sides equal. In addition, both horizontal and vertical 1D convolutions are used to exploit correlation in both directions. A regular transpose layer is applied following vertical 1D convolution to ensure the data size of $L \times 1 \times K_{\max}$. This prediction is used for both pilot and data power as depicted in Fig.~\ref{FigCNN} and is mathematically expressed as
\begin{align} 
\mathsf{X}_{z}^{(m)} = & \mathcal{H}_{ z}^v \left( \widetilde{\mathsf{X}}_{2,N,4}^{(m)} , \{ \mathsf{W}_{ z,j}^{v,(m-1)} , b_{ z,j}^{v,(m-1)} \}_{j=1}^{K_{\max}} \right) + \notag \\
& 
\mathcal{H}_{ z}^h \left( \widetilde{\mathsf{X}}_{2,N,4}^{(m)} , \{ \mathsf{W}_{ z,j}^{h,(m-1)}, b_{ z,j}^{h,(m-1)} \}_{j=1}^{K_{\max}} \right), \label{eq:PilotTensor}
\end{align}
where the index $z$ stands for either the pilot or data tensor; $\widetilde{\mathsf{X}}_{2,N,4}^{(m)} = \mathsf{X}_{2,N,4}^{(m)} + \mathsf{X}_{1}^{(m)}$; $\mathcal{H}_{z}^v (\cdot, \cdot)$ and $\mathcal{H}_{ z}^h (\cdot, \cdot)$ denote the vertical and horizontal series of $K_{\max}$ convolution operators used to predict the optimized powers by convolutional kernels $\mathsf{W}_{ z,j}^{v,(m-1)} \in \mathbb{R}^{1 \times L \times Q}, \mathsf{W}_{ z,j}^{h,(m-1)} \in \mathbb{R}^{L \times 1 \times Q}, \forall j$ and their related biases $b_{z,j}^{v,(m-1)}, b_{ z,j}^{h,(m-1)} \in \mathbb{R}, \forall j$. 
The feature maps from \eqref{eq:PilotTensor} are restricted in the closed unit interval $[0,1]$ by 
\begin{equation}
\mathsf{X}_{z,s}^{(m)} = \mathrm{Sigmoid}\left( \mathsf{X}_{z}^{(m)}  \right), 
\end{equation}
where the element-wise sigmoid activation function is $\mathrm{Sigmoid} (x) = 1/(1 + \exp(-x))$. Finally, the predicted pilot and data powers at epoch~$m$ are obtained by scaling up $\mathsf{X}_{z,s}^{(m)}$ and $\mathsf{X}_{z,s}^{(m)}$ as
\begin{equation}
\mathsf{O}_z^{(m)} =  \mathsf{P}\odot  \mathsf{X}_{z,s}^{(m)}, 
\end{equation} 
where $\mathsf{P} \in \mathbb{R}^{L\times 1 \times K_{\max}}$ is a collection of the maximum power budget $P_{l,k}$ from all users with $[\mathsf{P}]_{l,1,k} = P_{l,k}, \forall l,k$ and $\odot$ denotes the dot product of two tensors. The forward propagation is applied for both the training and testing phases.

\subsubsection{The back propagation} \label{Sec:BackPro}
The back propagation is only applied in the training phase. We consider supervised learning where the Frobenius norm is adopted to define the loss function as
\begin{equation} \label{eq:Loss}
f (\Theta^{(m)} ) =   \frac{1}{D} \sum_{i=1}^D \big\|  \mathsf{O}_{d}^{(m),i} - \mathsf{O}_{d}^{\mathrm{opt},i}  \big\|_F^2  +  \frac{1}{D} \sum_{i=1}^D \big\| \mathsf{O}_{p}^{(m),i}  - \mathsf{O}_{p}^{\mathrm{opt},i} \big\|_F^2
\end{equation}
with respect to the convolution kernels and biases in $\Theta$. The loss in \eqref{eq:Loss} is averaged over the training dataset $\{ \mathsf{O}_{p}^{\mathrm{opt}, i}, \mathsf{O}_{d}^{\mathrm{opt}, i} \}, i = 1, \ldots, D,$ where $D$ is the total number of large-scale fading realizations. While the supervised learning provides a simple loss function for the back propagation, it make PowerNet on average perform not better than the continuous mapping~\eqref{eq:Mapping}.

The back propagation utilizes \eqref{eq:Loss} to update all weights and biases. PowerNet will use stochastic gradient descent \cite{Goodfellow-et-al-2016} to obtain a good local solution to $\Theta$. Beginning with a random initial value $\Theta = \Theta^{(0)} , \Delta \Theta^{(0)} = \mathbf{0},$ and remember the current $ \Delta  \Theta^{(m)}$, epoch $m$ updates $\Theta^{(m)}$ 
using the so-called momentum $\alpha$ and the learning rate $\eta$. The computational complexity of the back propagation can be significantly reduced, while still can obtain a good solution to the kernel and bias if a random mini-batch $D_t$ with $D_t < D$ is properly selected \cite{Goodfellow-et-al-2016} rather than processing all the training data at once.

\begin{remark} \label{Remark:BestLabel}
	Since Algorithm~\ref{Algorithm:AlternatingApproach} only finds a stationary point, different initializations will lead to different stationary points, some giving a higher sum SE than others. Instead of creating the labels by running Algorithm~\ref{Algorithm:AlternatingApproach} only once, we can run the algorithm multiple times with different initializations and use the one with the highest sum SE as the label. The more initializations are used, the more likely it is that the global optimum is found. This method only increases the computational complexity to generate good data labels, while PowerNet retains the computational complexity as before. We will evaluate the performance benefit in Section~\ref{Section:NumericalResults}.
\end{remark} 

\vspace*{-0.5cm}
\subsection{Dataset, Training, and Testing Phases}

In order to train PowerNet, we use Algorithm~\ref{Algorithm:AlternatingApproach} to generate training pairs of user realizations and the corresponding outputs $\mathsf{O}_{p}^{\mathrm{opt}}, \mathsf{O}_{d}^{\mathrm{opt}}$ that are jointly optimized by our method presented in Algorithm~\ref{Algorithm:AlternatingApproach}. Aligned with the previous works \cite{Luca2018DL, lee2018deep}, an offline training phase is considered in this paper.\footnote{Future work can explore how to train the network using less training samples, for example, using semi-supervised methods.} Specifically, we generate data with the mini-batch size $L \times L \times K_{\max} \times D_t$ for the training and testing phase, respectively. We use the momentum and babysitting the learning rate to get the best prediction performance and minimize the training time as well. The Adam optimization is used to train our data set \cite{kingma2014adam}. PowerNet is dominated by exponentiations, divisions, and multiplications, the number of arithmetic operations required for the forward propagation at each epoch is computed as
\begin{equation}
9K_{\max}L^2Q  + 28 Q^2 L^2 N + 4 L^2 Q K_{\max} + 4LK_{\max},
\end{equation}
which is also the exact computational complexity of the testing phase where each large-scale fading tensor only passes through the neural network once and there is no back propagation. 

\vspace*{-0.25cm}

\begin{figure*}[t]
	\begin{minipage}{0.49\textwidth}
		\centering
		\includegraphics[trim=0.6cm 0.0cm 1.1cm 0.7cm, clip=true, width=3.2in]{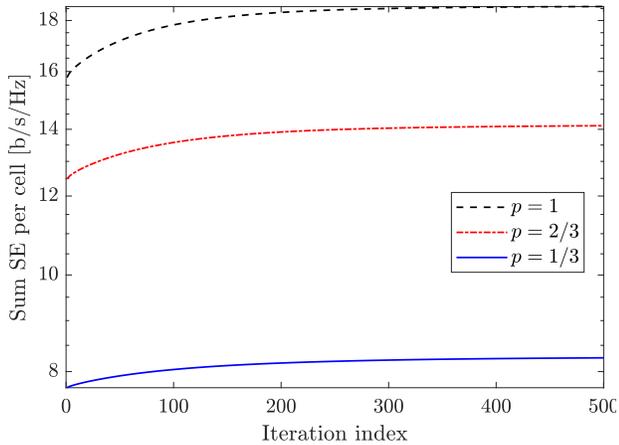} \vspace*{-0.1cm} \\
		(a)
	\end{minipage}
	\hfill
	\begin{minipage}{0.49\textwidth}
		\centering
		\includegraphics[trim=0.6cm 0.0cm 1.1cm 0.7cm, clip=true, width=3.2in]{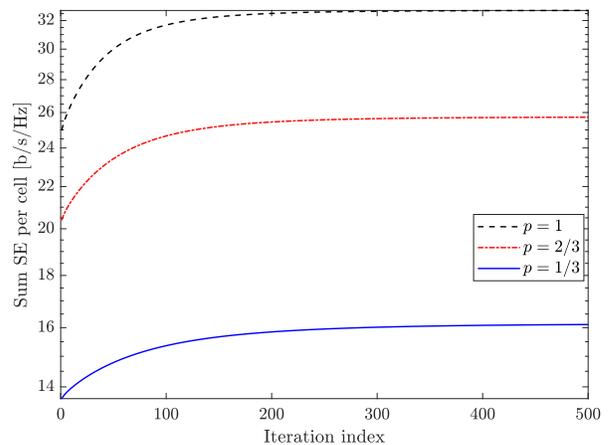} \vspace*{-0.1cm} \\
		(b)
	\end{minipage}
	\vspace*{-0.1cm}
	\caption{The convergence of Algorithm~\ref{Algorithm:AlternatingApproach} for the different activity probabilities: Fig.~\ref{FigConvg}a: the system with $L=4, K_{\max}= 5$. Fig.~\ref{FigConvg}b: The system with $L=9, K_{\max} =10$. }
	\label{FigConvg}
	\vspace*{-0.2cm}
\end{figure*}
\section{Numerical Results} \label{Section:NumericalResults}
To demonstrate the performance of PowerNet, we consider the setup in Example $1$ with $L \in \{4, 9\}$ equally large square cells in a square area $1$~km$^2$. The coverage area is wrapped around at the edges to avoid boundary effects, so $9L$ cells are considered in total. Each user is served by the BS having the largest large-scale fading coefficient.  In every cell, the BS is located at the center. The distribution of users and the large-scale fading coefficients are generated according to Example~\ref{Example:LSF}, but the activity probability will be defined later. The maximum power level is $P_{l,k} =200$~mW, $\forall l,k$. We use Algorithm~\ref{Algorithm:AlternatingApproach} to generate $1.5$ million data samples to train PowerNet. The mini-batch size is $512$. The number of epochs used for the training phase is $300$. We use a momentum of $0.99$ and babysitting of the learning rate which varies from $10^{-3}$ to $10^{-5}$. From our experiments, we note that the learning rate may be reduced by approximately three times if the test loss remains the same for $100$ consecutive epochs. \textcolor{red}{In the first convolutional layer, $64$ kernels are used and PowerNet has $5$ ResDense blocks}. For the loss function in \eqref{eq:Loss}, we set $w_1 = w_2 =1$ to treat the importance of the data and pilot powers equally. The following methods are compared: 
\begin{itemize}[leftmargin=4mm]
	\item[$1)$] \textit{Fixed power (FP) level}: Each user uses the fixed maximum power level $200$ mW for both pilot and data. It is denoted as FP in the figures. 
	\item[$2)$] \textit{Data power optimization only (DPOO)}: The system uses a simplification of Algorithm \ref{Algorithm:AlternatingApproach} to perform data power control, while the pilot power is fixed to $200$ mW. It is denoted as DPOO in the figures. 
	\item[$3)$] \textit{Joint pilot and data power optimization (JPDPO)}: The system uses Algorithm \ref{Algorithm:AlternatingApproach} to jointly optimize the optimal pilot and data powers for all users. It is denoted as JPDPO in the figures.
	\item[$4)$] \textit{Joint pilot and data power optimization based on CNN (PowerNet)}: The system uses the proposed CNN described in Section~\ref{Section:CNN} to find the pilot and data powers for all users. It is denoted as PowerNet in the figures. 
	\item[$5)$] \textit{Joint pilot and data power optimization based on fully-connected deep neural network}: The system uses a modified version of the fully-connected deep neural network in \cite{sun2018learning} to find the solution to both the pilot and data powers for all users. It is denoted as FDNN in the figures. 
\end{itemize}
\begin{figure*}[t]
	\begin{minipage}{0.49\textwidth}
		\centering
		\includegraphics[trim=0.5cm 0.0cm 0.5cm 0.7cm, clip=true, width=3.2in]{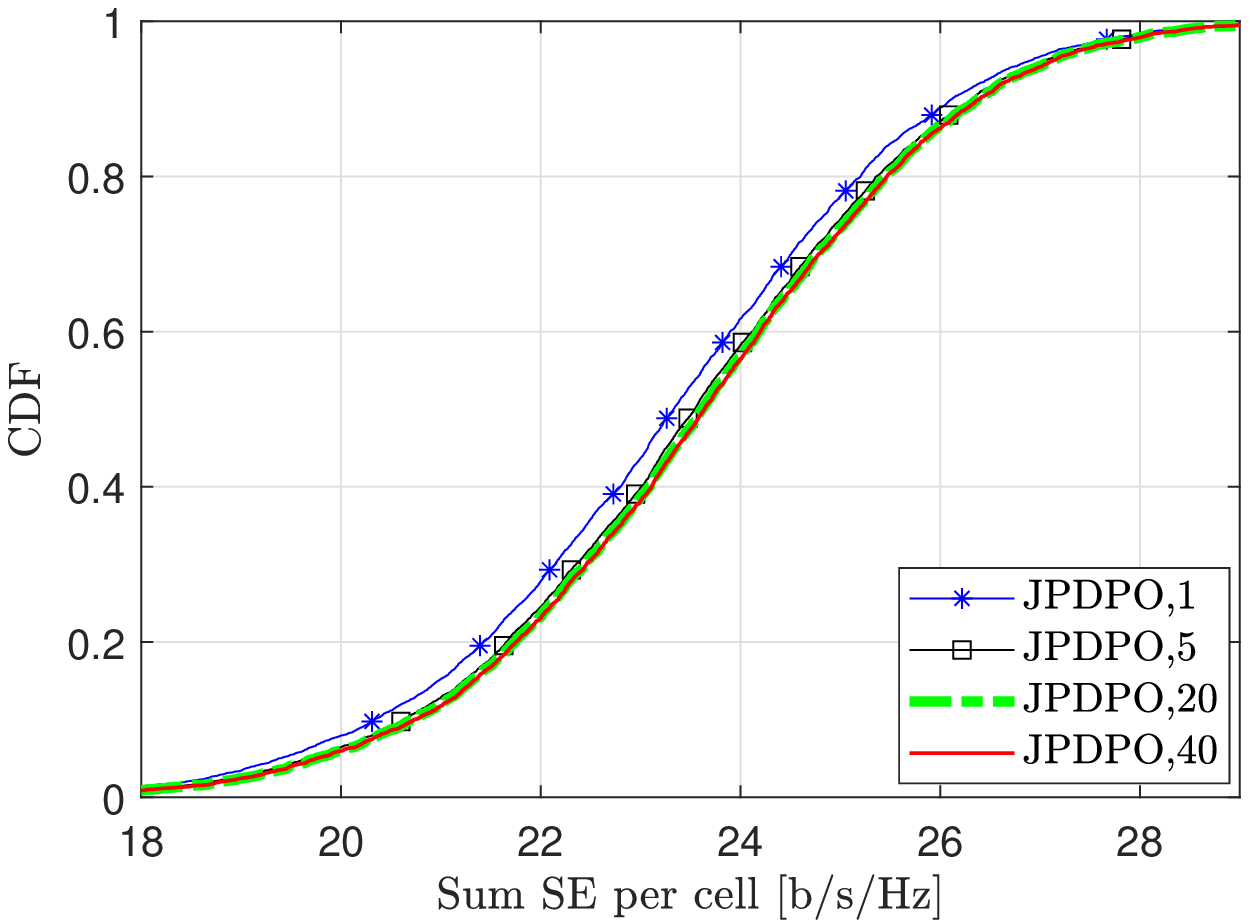} \vspace*{-0.1cm}
		\caption{CDF of sum SE per cell [b/s/Hz] versus the number of random initializations that we select the best out of. We have $L=4$, $K_{\max} = 10, M=200,$ and activity probability $2/3$.}
		\label{FigCDFDiffInitialization}
		\vspace*{-0.5cm}
	\end{minipage}
	\hfill
	\begin{minipage}{0.49\textwidth}
		\centering
		\includegraphics[trim=0.5cm 0.0cm 0.5cm 0.7cm, clip=true, width=3.2in]{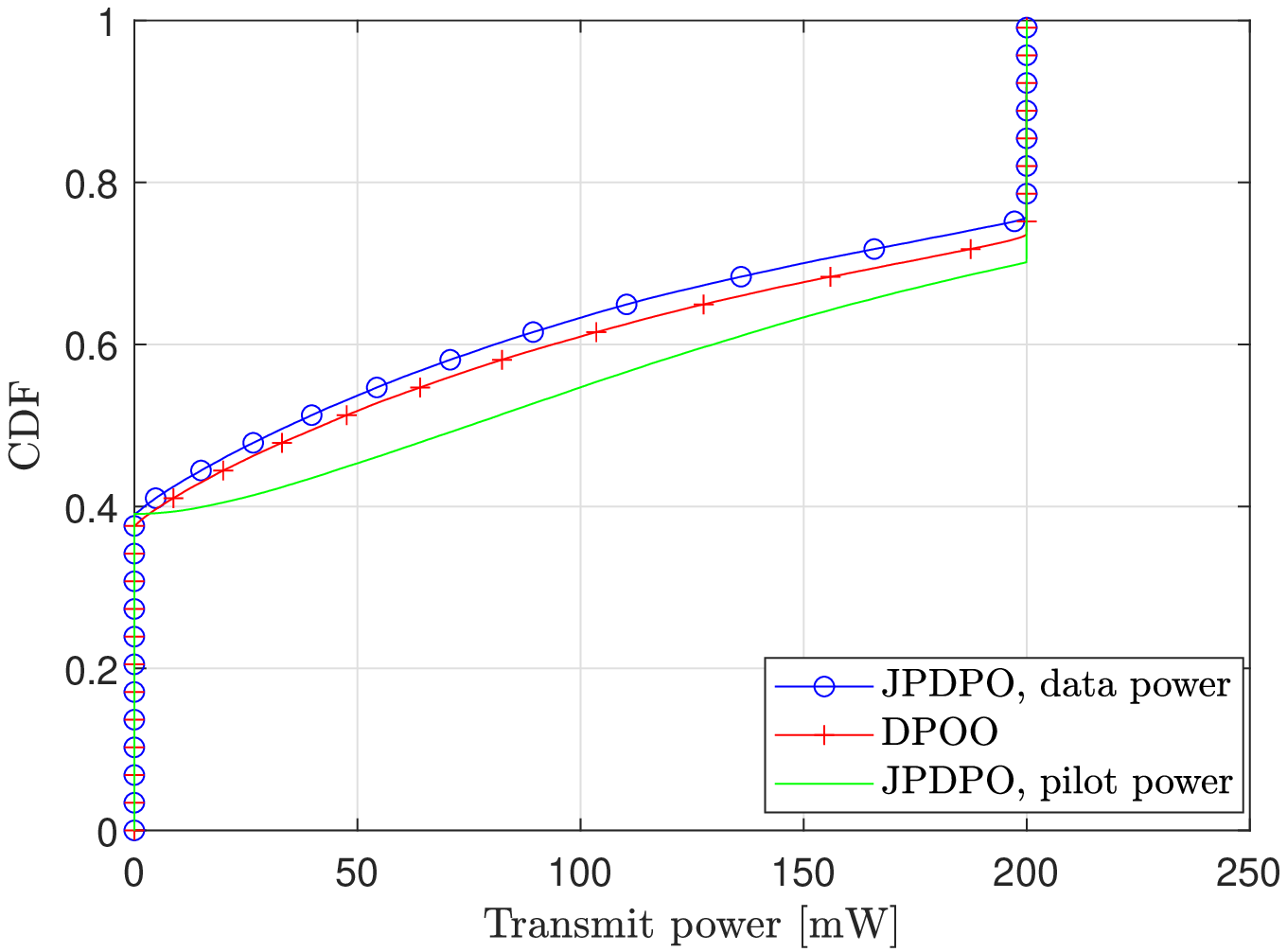} \vspace*{-0.1cm}
		\caption{CDF of pilot and data power allocation [mW] by using JPDPO and DPOO with $L=4, K_{\max} = 10, M=200,$ and activity probability $2/3$.}
		\label{FigPowerL4K10}
		\vspace*{-0.2cm}
	\end{minipage}
	
\end{figure*}

\begin{figure*}[t]
	\begin{minipage}{0.49\textwidth}
		\centering
		\includegraphics[trim=0.5cm 0.0cm 0.5cm 0.7cm, clip=true, width=3.2in]{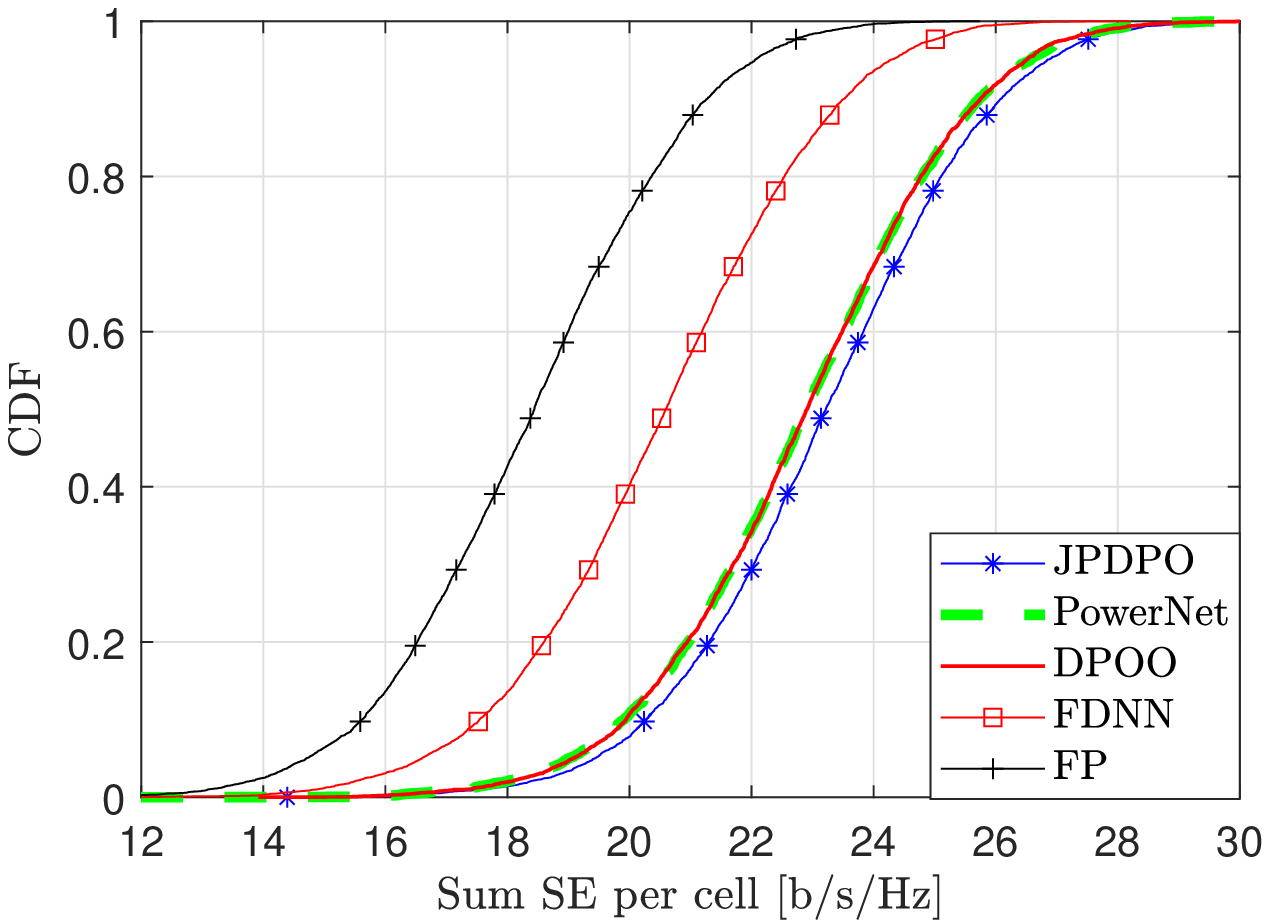} \vspace*{-0.1cm}
		\caption{CDF of sum SE per cell [b/s/Hz] with $L=4, K_{\max} = 10,M=200,$ and activity probability $2/3$.}
		\label{FigSEperCellL4K10Ber2p3}
		\vspace*{-0.2cm}
	\end{minipage}
	\hfill
	\noindent\begin{minipage}{0.49\textwidth}
		\centering
		\includegraphics[trim=0.5cm 0.0cm 0.5cm 0.7cm, clip=true, width=3.2in]{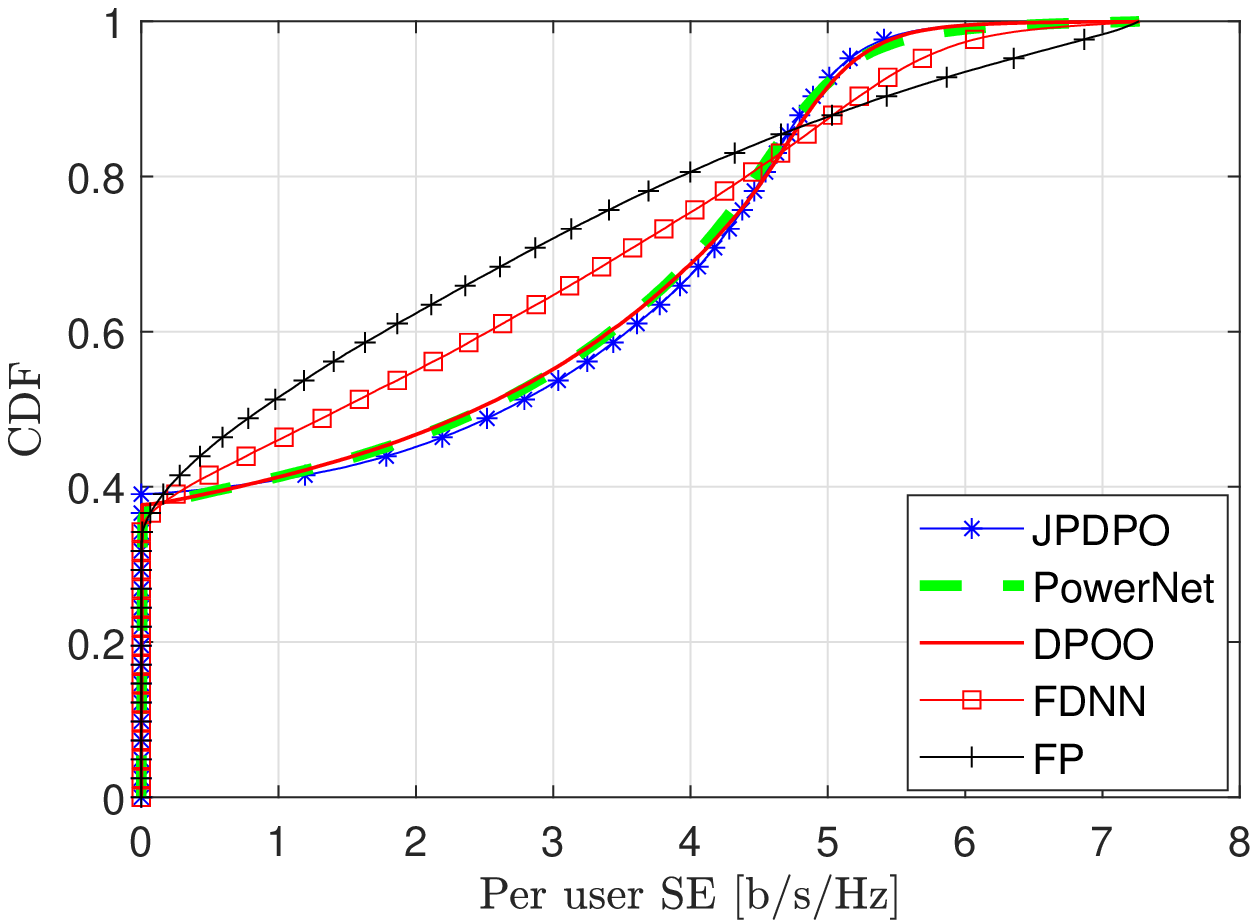} \vspace*{-0.1cm}
		\caption{CDF of per user SE [b/s/Hz] with $L=4, K_{\max} = 10,M=200,$ and activity probability $2/3$.}
		\label{FigPerUserSEL4K10Ber2p3}
		\vspace*{-0.2cm}
		\noindent \end{minipage}
\end{figure*}

\begin{figure*}[t]
	
	\begin{minipage}{0.49\textwidth}
		\centering
		\includegraphics[trim=0.5cm 0.0cm 0.5cm 0.7cm, clip=true, width=3.2in]{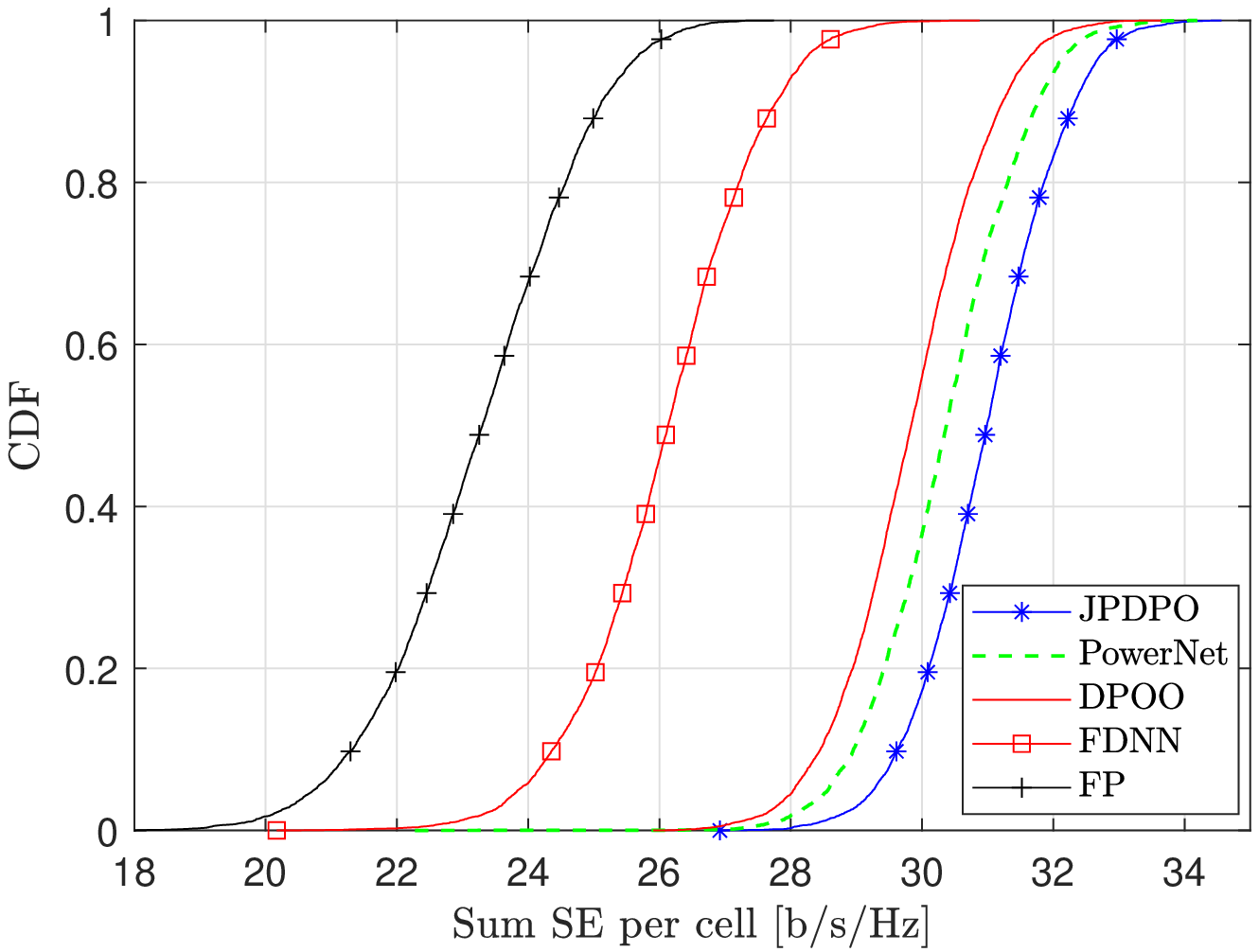} \vspace*{-0.1cm}
		\caption{CDF of sum SE per cell [b/s/Hz] with $L=9, K_{\max} = 10,M=200,$ and activity probability $1$.}
		\label{FigSumSEperCellL9K10}
		\vspace*{-0.2cm}
	\end{minipage}
	\hfill
	\begin{minipage}{0.49\textwidth}
		\centering
		\includegraphics[trim=0.5cm 0.0cm 0.5cm 0.7cm, clip=true, width=3.2in]{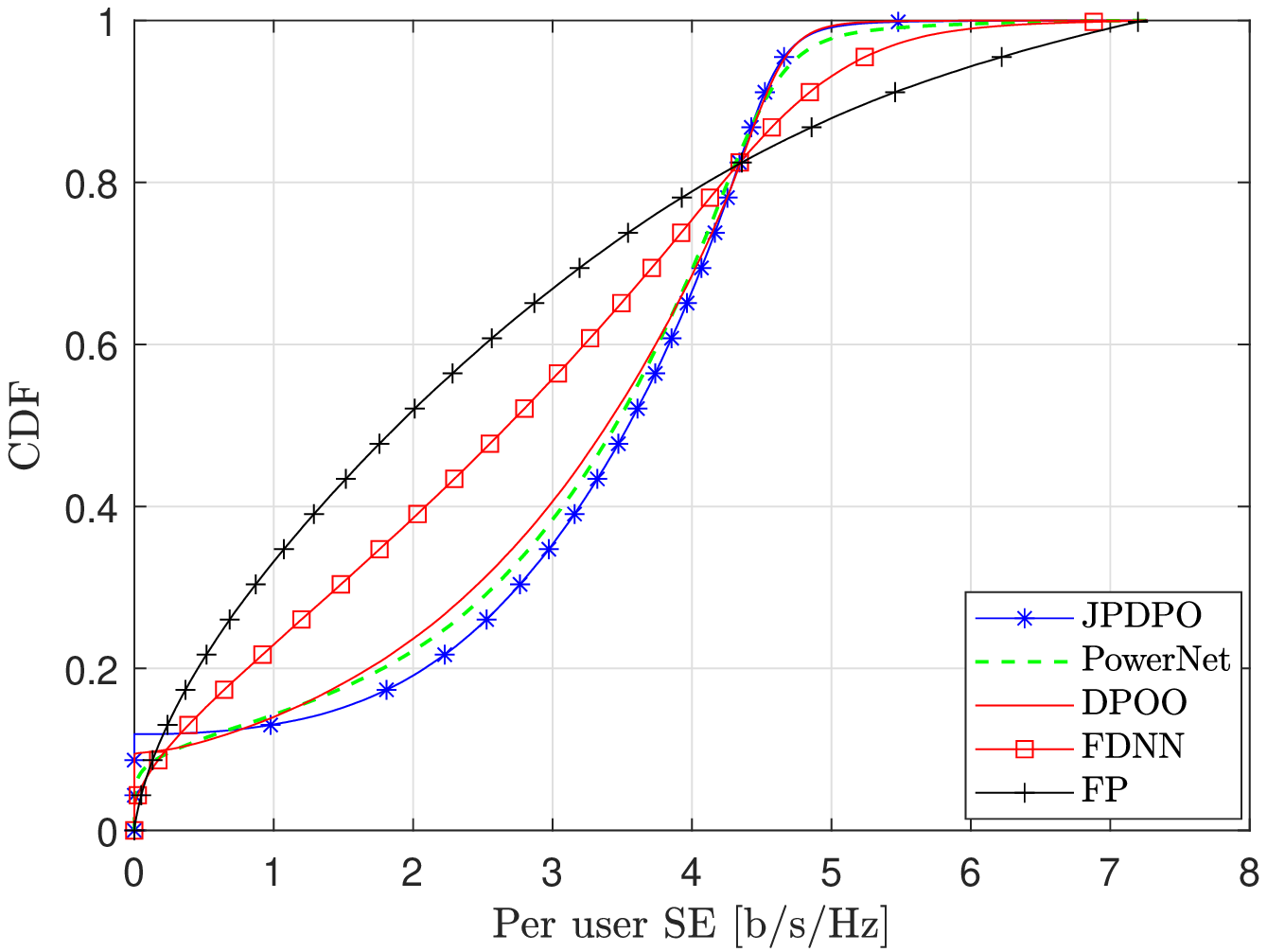} \vspace*{-0.1cm}
		\caption{CDF of per user SE [b/s/Hz] with $L=9, K_{\max} = 10,M=200,$ and activity probability $1$.}
		\label{FigCDFPerUserRateL9K10}
		\vspace*{-0.2cm}
	\end{minipage}
\end{figure*}

\begin{figure*}[t]
	\noindent\begin{minipage}{0.49\textwidth}
		\centering
		\includegraphics[trim=0.5cm 0.0cm 0.5cm 0.7cm, clip=true, width=3.2in]{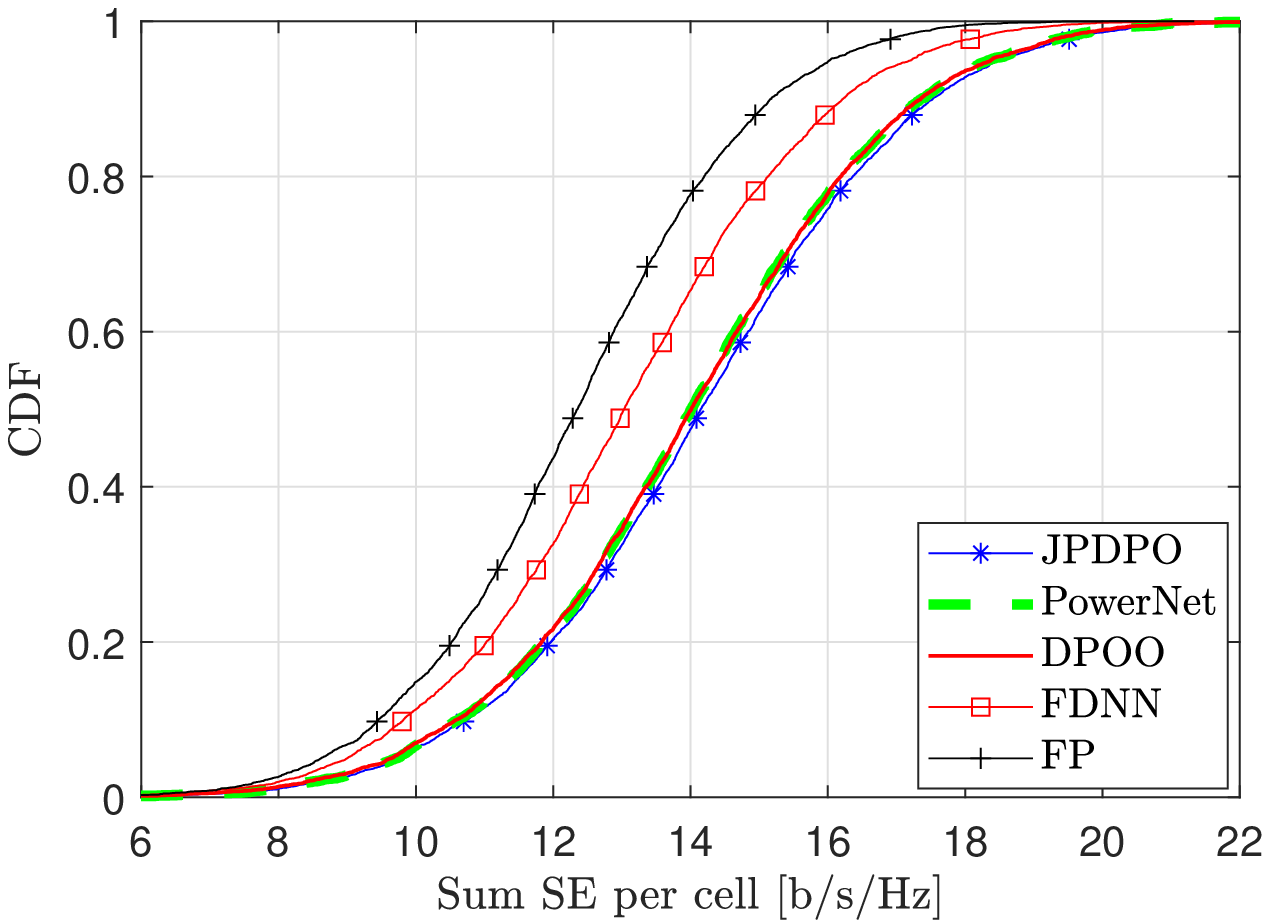} \vspace*{-0.1cm}
		\caption{CDF of SE per cell [b/s/Hz] with $L=4,K_{\max} = 10,$ and $M=200$.  All users are in active mode with a probability $2/3$ in the training phase and $1/3$ in the testing phase.}
		\label{FigSumSEperCellL4K10Ber1p3}
		\vspace*{-0.2cm}
		\noindent \end{minipage}
	\hfill
	\begin{minipage}{0.49\textwidth}
		\centering
		\includegraphics[trim=0.5cm 0.0cm 0.5cm 0.7cm, clip=true, width=3.2in]{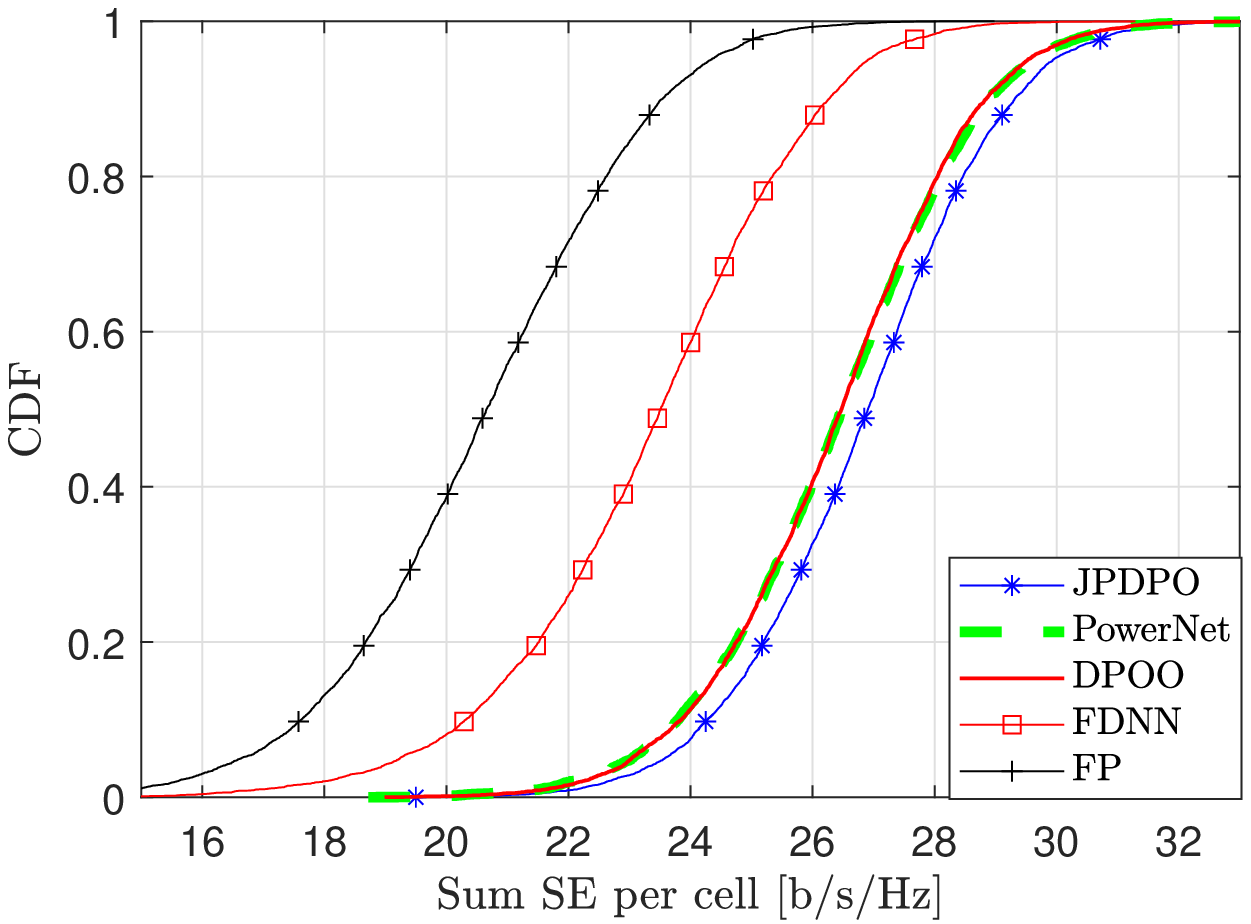} \vspace*{-0.1cm}
		\caption{CDF of SE per cell [b/s/Hz] with $L=4, K_{\max} = 10,$ and $M=200$.  All users are in active mode with a probability $2/3$ in the training phase and $5/6$ in the testing phase.}
		\label{FigSumSEperCellL4K10Ber5p6}
		\vspace*{-0.2cm}
	\end{minipage}
\end{figure*}
\begin{figure*}[t]
	\begin{minipage}{0.49\textwidth}
		\centering
		\includegraphics[trim=0.5cm 0.0cm 0.5cm 0.7cm, clip=true, width=3.2in]{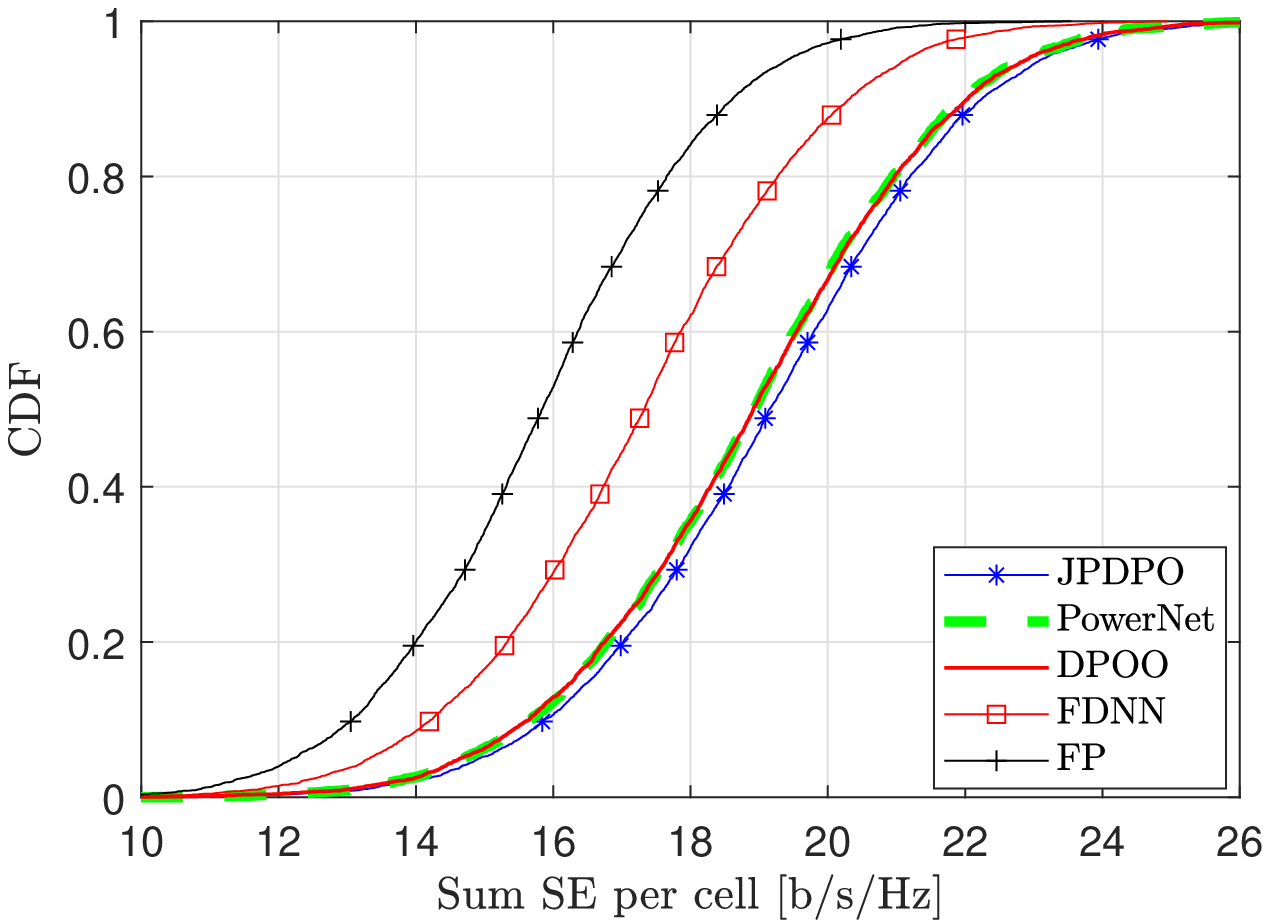} \vspace*{-0.1cm}
		\caption{CDF of SE per cell [b/s/Hz] with $L=4 ,K_{\max} = 10$, and $M=200$. Each user has the activity probability uniformly distributed in $[0,1]$.}
		\label{FigSumSEperCellL4K10BerVaries}
		\vspace*{-0.2cm}
	\end{minipage}
	\hfill
	\noindent\begin{minipage}{0.49\textwidth}
		\centering
		\includegraphics[trim=0.5cm 0.0cm 0.5cm 0.7cm, clip=true, width=3.2in]{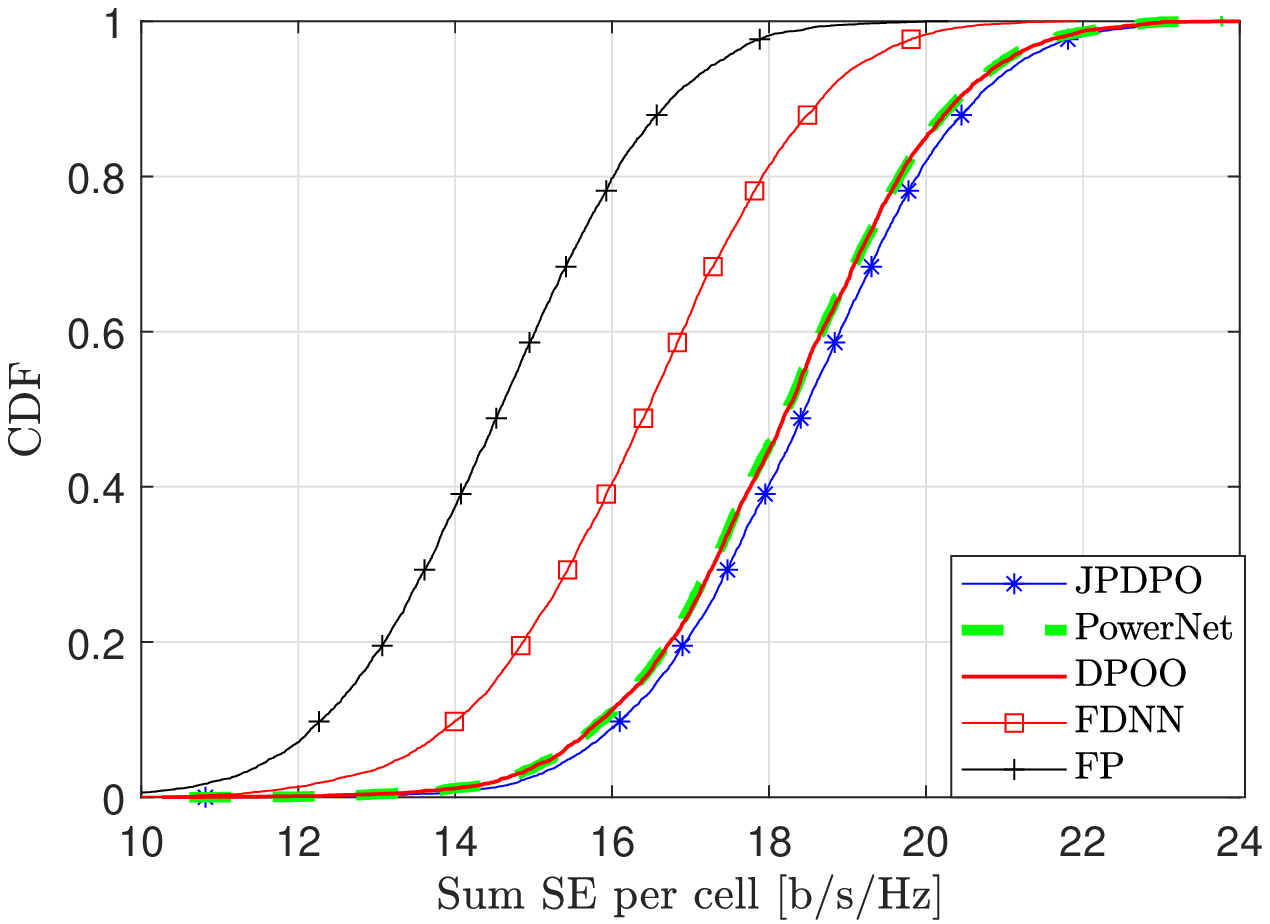} \vspace*{-0.1cm}
		\caption{CDF of SE per cell [b/s/Hz] with $L=4 ,K_{\max} = 10$, $M=200$ for the training phase, and $M=100$ for the testing phase. All users are in active mode with a probability $p=2/3$.}
		\label{FigSumSEperCellL4K10M100}
		\vspace*{-0.2cm}
		\noindent \end{minipage}
\end{figure*}
\vspace*{-0.25cm}
\subsection{Convergence, Sum Spectral Efficiency, \& Power Consumption}
Fig.~\ref{FigConvg} shows the convergence of Algorithm~\ref{Algorithm:AlternatingApproach} for different activity probabilities. Fig.~\ref{FigConvg}a is for a system with $L=4, K_{\max} =5$. It demonstrates that the number of iterations to reach the stationary point varies when the activity probability of each user changes. For example, Algorithm~\ref{Algorithm:AlternatingApproach} needs an average of $400$ iterations to converge with the activity probability $1/3$, while it only needs $300$ iterations when all users are in active mode. As another example, Fig.~\ref{FigConvg}b shows the convergence of a system with $L=9, K_{\max} =10$. Comparing the two figures, we notice that the number of iterations needed to reach the stationary point does not vary significantly when increasing the network size. At iteration~$300$, the sum SE per cell is $99\%$ of the stationary point  both with $L=4, K_{\max} =5$ and with $L=9, K_{\max} =10$.

Algorithm~\ref{Algorithm:AlternatingApproach} provides a local optimum to the sum SE optimization problem, but which local optimum that is found depends on the initialization.
One way to benchmark the quality of the obtained local optimum is to run the algorithm for many random initializations and take the best result.
Fig.~\ref{FigCDFDiffInitialization} shows the cumulative distribution function (CDF)
of the sum SE per cell obtained from Algorithm~\ref{Algorithm:AlternatingApproach} when using the best out of 1, 5, 20, or 40 different initializations.  Each initial power coefficient is uniformly distributed in the range $[0,  P_{l,k}]$. In comparison to one initialization, there are only tiny gains by spending more efforts on selecting the best out of multiple initializations. The largest relative improvement is when going from 1 to 5 initializations, but the average improvement is still less than $1\%$. Further increasing the number of initializations has very small impact on the sum SE. Hence, in the rest of this section, only one initialization is considered in the most figures, except Fig.~\ref{FigSumSEUSL}. 

We show the pilot and data power coefficients produced by our proposed methods JPDPO and DPOO in Fig.~\ref{FigPowerL4K10} for the system with $L=4, K_{\max}=10,$ and $M=200$. All users has the activity probability $2/3$. Apart from the fact that 33\% of the users are inactive on the average, an additional $5\%$  of the users are rejected from service by JPDPO due to bad channel conditions, which leads to zero power when optimizing the sum SE. By utilizing JPDPO, we observe that an user in the active mode allocates $127$ mW to each data symbol on average, while that is $150$ mW for each pilot symbol. This $18\%$ extra power is to improve the channel estimation quality. Even though many data and pilot symbols spend full power $200$ mW to achieve the best SE, JPDPO provides $25\%$ and $36\%$ less power than FP. For DPOO, we are only optimizing the data power and each data symbol is allocated $124$ mW.

\vspace*{-0.5cm}
\subsection{Predicted Performance of PowerNet}
The CDF of sum SE per cell [b/s/Hz] is shown in Fig.~\ref{FigSEperCellL4K10Ber2p3} for a system with $L=4, K_{\max}=10$, and $M=200$. Each user has the activity probability $2/3$. The sum SE per cell predicted by PowerNet is almost the same as the ones obtained DPOO and it is $1.5\%$ less than by Algorithm~\ref{Algorithm:AlternatingApproach}. FDNN performs $11.6\%$ better than FP, but there is $12.9\%$ more to reach the performance of Algorithm~\ref{Algorithm:AlternatingApproach}.  Fig.~\ref{FigPerUserSEL4K10Ber2p3} presents the prediction performance of per user SE for a four-cell system with $K_{\max}=10$ users. The SE obtained by PowerNet is very close to JPDPO with only about $1\%$ loss. Fig.~\ref{FigPerUserSEL4K10Ber2p3} also demonstrates that around $40\%$ of the users are out of service, in which case no power is allocated to the training and data transmission phases.

Fig.~\ref{FigSumSEperCellL9K10} shows the CDF of sum SE per user [b/s/Hz] for the system with $L=9, K_{\max} = 10$, and $M=200$, while the related case of per user SE is shown in Fig.~\ref{FigCDFPerUserRateL9K10}. All users have the activity probability $1$. FP provides the sum SE baseline of $23.26$ b/s/Hz which corresponds to a per user SE of $2.33$ b/s/Hz. FDNN can obtain $12.24\%$ better average SE than the baseline. In this scenario, a $4\%$ higher SE is achieved by optimizing both data and pilot powers, as compared to only optimizing the data powers. Even though the number of optimization variables is much larger than in previous figures, the average prediction error of PowerNet is still very low. The improvement of PowerNet over FP is up to $16.3\%$ for the sum SE, while it is $12.87\%$ for the per user SE. PowerNet yields $1.78\%$ better sum SE than JPDPO and the loss is only $2\%$ compared with JPDPO. These results prove the scalability of PowerNet. We emphasize that there are two main reasons why PowerNet outperforms FDNN: First, PowerNet can learn better special features from multiple observations of the large-scale fading tensors by extracting the spatial correlations among BSs based on different kernels. Second, the residual dense blocks can prevent the gradient vanishing problem effectively thanks to the extra connections between the input and output of each layer.

\vspace*{-0.25cm}
\subsection{Varying User Activity, Channel Models, and Data Label Effect}

In practice, the user activity probability will change over the day, thus it is important for PowerNet to handled this fact. Fig.~\ref{FigSumSEperCellL4K10Ber1p3} displays the CDF of the sum SE per cell [b/s/Hz] with $4$ cells, each serving $10$ users. In the training phase, each user has the activity probability $2/3$, while it is $1/3$ for the testing phase. Interestingly, PowerNet still predicts the pilot and data power coefficients very well. The sum SE per cell obtained by PowerNet is almost $99\%$ of JPDPO. Additionally, data power control is sufficient in this scenario since DPOO achieves $99\%$ of the sum SE that is produced by JPDPO. Fig.~\ref{FigSumSEperCellL4K10Ber5p6} considers a more highly-loaded system with the activity probability of each user in the testing phase being $5/6$. There is a $30\%$ gap between FP and JPDPO in this case. Furthermore, JPDPO brings $15\%$ the sum SE better than FDNN. PowerNet achieves about $98.4\%$ of what is produced by Algorithm~\ref{Algorithm:AlternatingApproach}. 

The sum SE per cell [b/s/Hz] for a system with $L=4$, $K_{\max} =10$, and $M=200$ is displayed in Fig.~\ref{FigSumSEperCellL4K10BerVaries}. In the figure, each user has its own activity probability, which is uniformly distributed in the range $[0,1]$.  FP yields the baseline average SE of $15.84$ b/s/Hz. Meanwhile, JPDPO produces the highest SE of $19.12$ b/s/Hz per cell, which is a gain of $20.71\%$. By only optimizing the data powers, DPOO loses $1.32\%$ in SE over JPDPO. Importantly, PowerNet predicts the power coefficients with high accuracy and the SE is very close to JPDPO with a loss of only $1.34\%$.
Therefore, we conclude that a single PowerNet can be trained and applied when the activity probability of users varies over time. Additionally, Fig.~\ref{FigSumSEperCellL4K10M100} displays the performance of a system where the number of antennas equipped at each BS in the testing phase is different from the training phase. It shows that PowerNet still provides very high prediction accuracy with the loss being only $1.30\%$. This indicates that PowerNet can be also applied in the scenarios where we may turn on and off antennas to improve energy-efficiency \cite{senel2019joint}. 

To provide a sensitivity analysis, Fig.~\ref{FigSumSEperCellL4K10DiffChannels} shows the prediction accuracy of PowerNet when the large-scale fading follows different distributions in the training and testing phases. In particular, the large-scale fading coefficient of user~$t$ in cell~$l$ and BS~$i$ is defined in \eqref{eq:ShadowFading} when we train PowerNet. However, the testing phase assumes $\beta_{l,t}^i \mbox{[dB]} = -151.1 - 42.8 \log_{10} (d_{l,t}^i / 1 \mathrm{km}) + z_{l,t}^i, \forall l,t,i, $ as suggested by 3GPP \cite{LTE2010b}. PowerNet still yields very high sum SE per cell and the prediction loss is less than $2\%$ compared with JPDPO.  It demonstrates that PowerNet achieves a good generalizability. We believe that this thanks to the random shadow fading realizations considered in the training phase. Meanwhile, Fig.~\ref{FigSumSEUSL} shows a zoomed-in version, where we have also trained PowerNet using the best of $40$ different initializations. At the $95\%$-likely, we observe an improvement of about $2.5\%$ over the supervised learning with using data labels from solving the optimization with one initial point. In particular, PowerNet trained using $40$ initializations outperforms JPDPO based on one initialization about $0.5\%$.

\begin{figure*}[t]
	\begin{minipage}{0.49\textwidth}
		\centering
		\includegraphics[trim=0.2cm 0.0cm 0.5cm 0.7cm, clip=true, width=3.2in]{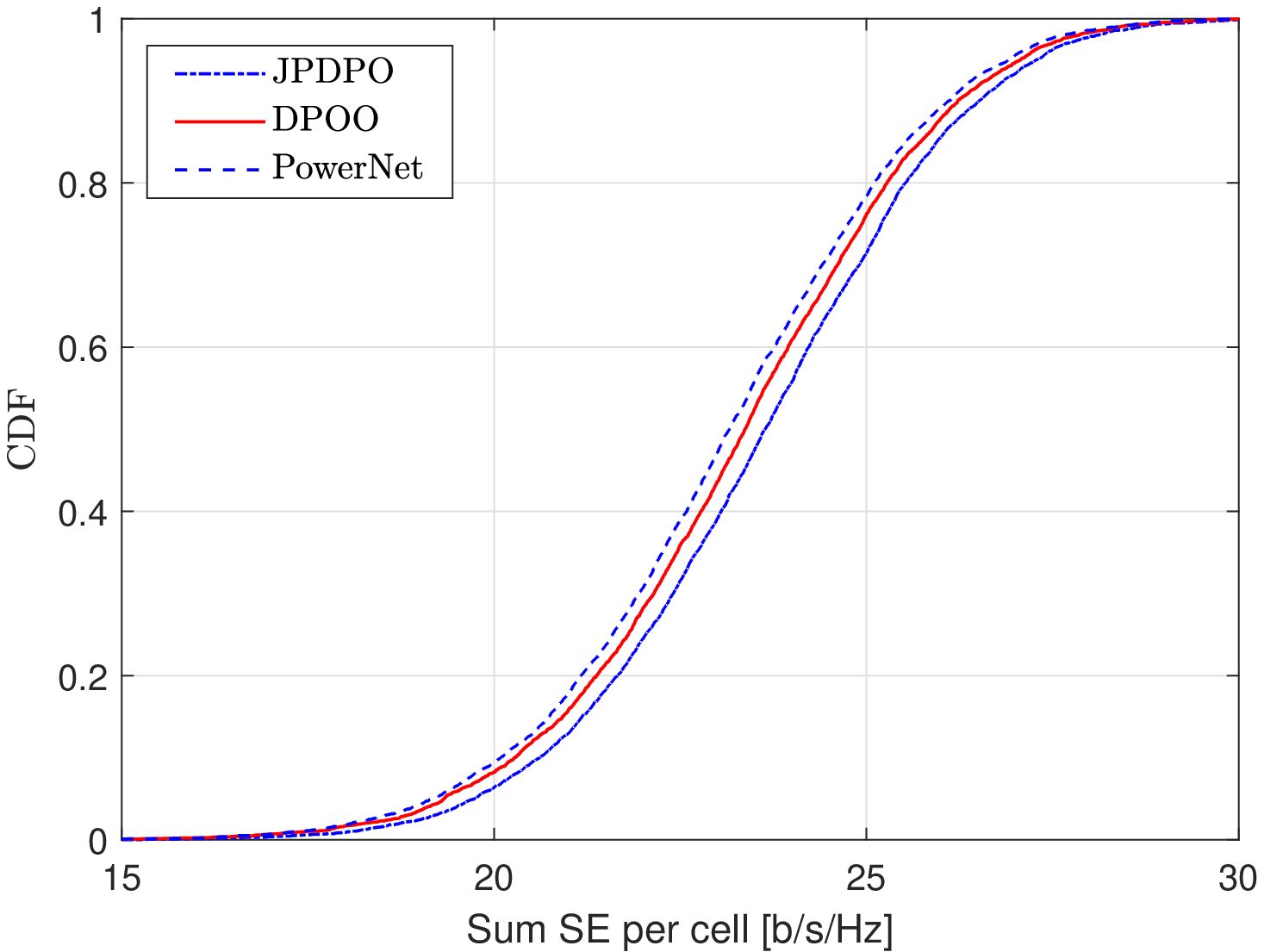} \vspace*{-0.1cm}
		\caption{Performance of PowerNet by different large-scale fading models. The system has $L=4 ,K_{\max} = 10$, and $M=200$. Activity probability is $p=2/3$.}
		\label{FigSumSEperCellL4K10DiffChannels}
		\vspace*{-0.2cm}
	\end{minipage}
	\hfill
	\noindent\begin{minipage}{0.49\textwidth}
		\centering
		\includegraphics[trim=0.2cm 0.0cm 0.5cm 0.7cm, clip=true, width=3.2in]{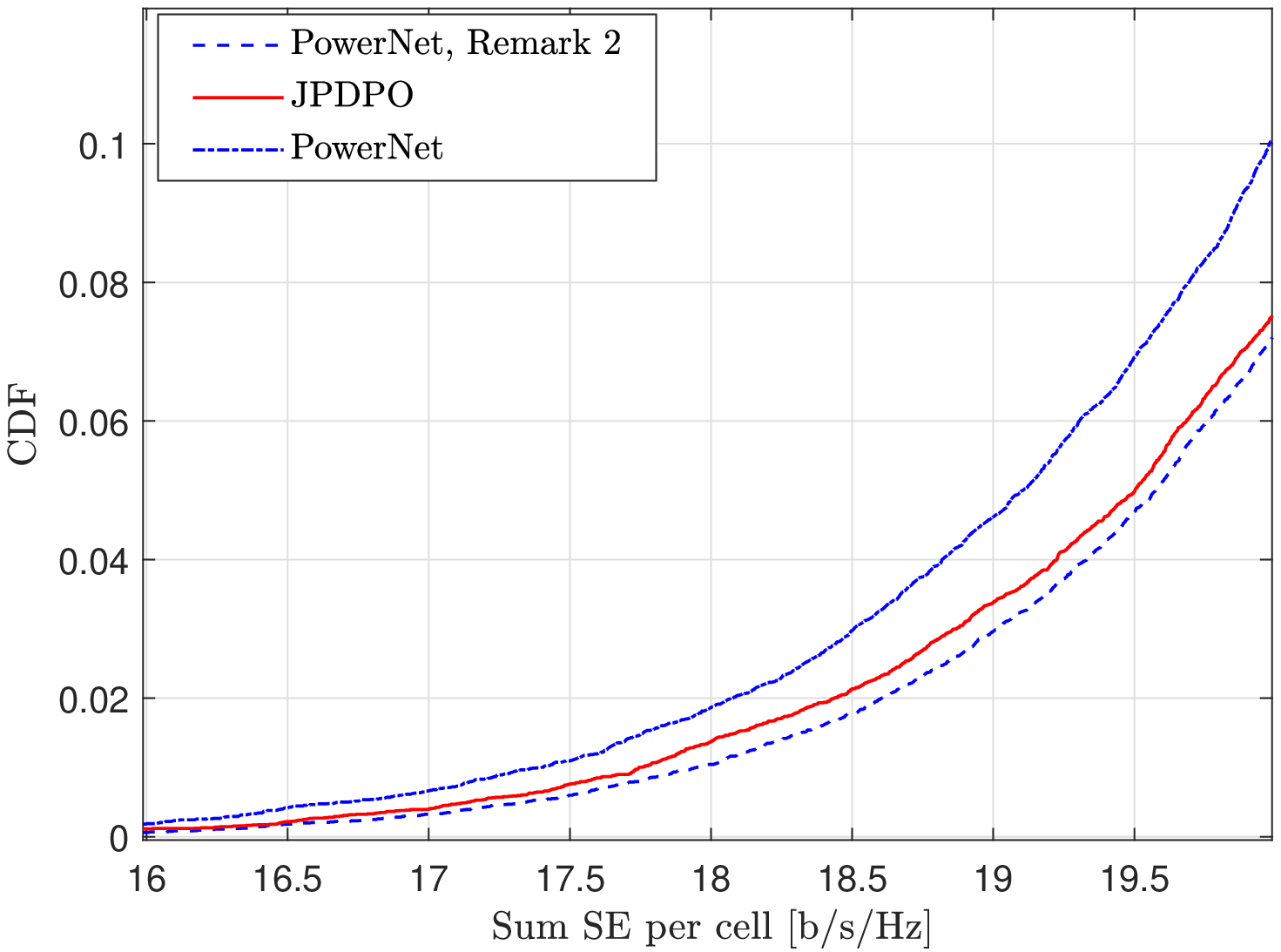} \vspace*{-0.1cm}
		\caption{Performance of PowerNet by utilizing Remark~\ref{Remark:BestLabel}. The system has $L=4 ,K_{\max} = 10$, $M=200$. All users are in active mode with a probability $p=2/3$.}
		\label{FigSumSEUSL}
		\vspace*{-0.2cm}
		\noindent \end{minipage}
\end{figure*}

\subsection{Runtime}
To evaluate the computational complexity, we implemented the testing phase in MatConvNet \cite{vedaldi2015matconvnet} and ran it on a Windows $10$ personal computer having the central processing unit (CPU) AMD Ryzen $1950$x $16$-Core with $3.40$ GHz and a Titan XP Nvidia GPU $12$~GB. Theorem~\ref{Theorem:IterativeAl} and its simplification version (DPOO) are implemented with the accuracy value $\epsilon$ in \eqref{eq:Stoping} being $0.01$ and our best efforts to optimize the implementations in Matlab environment. The average runtime is given in Table~\ref{tablerunningitme}. The algorithm in Theorem~\ref{Theorem:IterativeAl} gives the highest runtime among the considered methods.
For a system with $4$ cells, each serving maximum $5$ users, the runtime for Theorem~\ref{Theorem:IterativeAl} is $42.24$ ms, while it is $30.90$ ms if we only consider data powers as optimization variables. The proposed neural network is tested using the GPU or using only the CPU. For a system with $4$ cells, each serving $5$ users, the runtime when using the CPU is $2.99$~ms. If there are $9$ cells and $10$ users per cell, it requires approximately $5 \times$ more to obtain the solution than the system with $L=4, K_{\max} =5$. By enabling GPU mode, PowerNet can be applied for a nine-cell system ($L=9, K_{\max} =10$) with a runtime of $0.0283$~ms while taking about $0.018$~ms for a four-cell system. Hence, when using a GPU, the runtime is sufficiently low for real-time applications where each coherence interval may have a time duration of around $1$~ms and therefore require sub-millisecond resource allocation decisions.
\begin{table*}[t]
	\caption{Average running time of different methods in millisecond.} \vspace*{-0.1cm} \label{tablerunningitme}
	\centerline{ 
		\begin{tabular}{|c|c|c|c|c|}
			\hline
			\diagbox[width=14em]{System Specifications}{Benchmark} & JPDPO (CPU)  & DPOO (CPU) & PowerNet (CPU) &  PowerNet (GPU) \\
			\hline
			 $L=4,K_{\max}=5$  & $42.24$  & $30.90$ & $2.99$ & $0.0177$ \\
			\hline
			$L=9,K_{\max}=10$ & $491.08$ & $269.33$ &  $14.90$ & $0.0283$ \\
			\hline				
		\end{tabular}
	}\vspace*{-0.2cm}
\end{table*}
 
\vspace*{-0.25cm}
\section{Conclusion} \label{Section:Conclusion}
This paper has constructed a framework for the joint pilot and data power control for the sum SE maximization in uplink cellular Massive MIMO systems with a varying number of active users. This is a non-convex problem but we proposed a new iterative algorithm, inspired by the weighted MMSE approach, to find a stationary point.
 The joint pilot and data power optimization obtains 30\% higher sum SE than equal power transmission in our simulation setup. We have used the proposed algorithm to also construct a deep neural network, called PowerNet, that predicts both the data and pilot powers very well, leading to less than 1\% loss in sum SE in a symmetric multi-cell system serving 90 users.
 
 PowerNet uses only the large-scale fading coefficients to predict the transmit power, making it scalable to Massive MIMO systems with an arbitrarily large number of antennas. It has a runtime that is far below a $1$\,ms, meaning that it enables real-time power control in systems where new power control coefficients need to be obtained at the millisecond level due to changes in the scheduling decisions or user mobility. Importantly, PowerNet is designed and trained such that a single neural network can handle varying number of users per cell, which has not been the case in prior works. This demonstrates the feasibility of using deep learning for real-time power control in Massive MIMO, while still attaining basically the same performance as when solving the original problems using optimization theory. Since PowerNet can be easily implemented on standard specialized hardware that is developed for CNNs, while an efficient implementation of classical optimization algorithm requires the design of dedicated hardware circuits.
\appendix
\subsection{Proof of Theorem~\ref{Theorem:WMMSEMRC}} \label{Appendix:WMMSEMRC}
The SE of user~$k$ in cell~$l$ can be achieved by the following single-input single-output system
\begin{equation}
\tilde{y}_{l,k} = 	\sqrt{M K_{\max}} \rho_{l,k}  \hat{\rho}_{l,k} \beta_{l,k}^l x_{l,k} + w_{l,k}, 
\end{equation}
where $x_{l,k}$ is the desired real data symbol with $\mathbb{E} \{ x_{l,k}^2 \} = 1$. $w_{l,k}$ is Gaussian noise distributed as $\mathcal{N}(0, D_{l,k})$ with noting that $\rho_{l,k} = \sqrt{p_{l,k}},$  $\hat{\rho}_{l,k} = \sqrt{\hat{p}_{l,k}}, \forall l,k$, and $\mathcal{N}(\cdot, \cdot)$ being a Gaussian distribution. By using a beamforming coefficient $u_{l,k} \in \mathbb{R}$ to detect the desired signal as
\begin{equation} \label{eq:xhatlk}
\hat{x}_{l,k} = u_{l,k} \tilde{y}_{l,k} = \sqrt{M K_{\max}} \rho_{l,k}  \hat{\rho}_{l,k} \beta_{l,k}^l u_{l,k}  x_{l,k} + u_{l,k} w_{l,k}.
\end{equation}
The MSE of this decoding process is computed as
\begin{equation} \label{eq:elk}
e_{l,k} = \mathbb{E} \{ (x_{l,k} - \hat{x}_{l,k})^2 \}.
\end{equation}
Plugging the value $\hat{x}_{l,k} $ in \eqref{eq:xhatlk} into \eqref{eq:elk} and doing some algebra, we obtain the expression of $e_{l,k}$ as in \eqref{Prob:WMMSEv2}. For a given set $\{ u_{l,k}, \hat{p}_{l,k}, p_{l,k} \}$, the optimal solution to $u_{l,k}$ is obtained by taking the first-order derivative of $e_{l,k}$ with respect to $u_{l,k}$ and setting it to zero as
\begin{equation}
\begin{split}
& M K_{\max} u_{l,k} \sum_{i \in \mathcal{P}_k } (\rho_{i,k})^2 (\hat{\rho}_{i,k})^2 (\beta_{i,k}^l)^2 -  \sqrt{MK_{\max}} \rho_{l,k} \hat{\rho}_{l,k} \beta_{l,k}^l + \\
&u_{l,k} \left( K_{\max} \sum_{i \in \mathcal{P}_k} (\hat{\rho}_{i,k})^2 \beta_{i,k}^l + \sigma_{\mathrm{UL}}^2 \right) \left( \sum_{i=1}^L \sum_{t \in \mathcal{A}_i} (\rho_{i,t})^2 \beta_{i,t}^l + \sigma_{\mathrm{UL}}^2 \right)\\
&
 =0.
\end{split}
\end{equation}
Therefore the optimal solution $u_{l,k}^{\mathrm{opt}}$ is computed as in \eqref{eq:ulkopt}.
\begin{figure*}
	\begin{equation} \label{eq:ulkopt}
	\fontsize{11}{11}{\begin{split}
	&u_{l,k}^{\mathrm{opt}} = \frac{\sqrt{MK_{\max}} \rho_{l,k} \hat{\rho}_{l,k} \beta_{l,k}^l}{ MK_{\max} \sum_{i=1}^L \rho_{i,k}^2 \hat{\rho}_{i,k}^2 (\beta_{i,k}^l)^2 + \left( K_{\max} \sum_{i \in \mathcal{P}_k} \hat{\rho}_{i,k}^2 \beta_{i,k}^l + \sigma_{\mathrm{UL}}^2 \right) \left( \sum_{i=1}^L \sum_{t \in \mathcal{A}_i} (\rho_{i,t})^2 \beta_{i,t}^l + \sigma_{\mathrm{UL}}^2 \right)}.
	\end{split}}
	\end{equation}
	\hrule
	\vspace*{-0.5cm}
\end{figure*}
The optimal value to $w_{l,k}$  is computed by taking the first-order derivative of the objective function in problem~\eqref{Prob:WMMSEv1} with respect to $w_{l,k}$, and then equating it to zero:
\begin{equation} \label{eq:wlkopt}
w_{l,k}^{\mathrm{opt}} = e_{l,k}^{-1}.
\end{equation}
Using \eqref{eq:ulkopt} and \eqref{eq:wlkopt} into \eqref{Prob:WMMSEv1}, we obtain the following optimization problem
\begin{equation} \label{Prob:SumRatev1}
\begin{aligned}
& \underset{\{ \hat{p}_{l,k}, p_{l,k} \geq 0 \} }{\mathrm{minimize}}
&&   \sum_{l=1}^L |\mathcal{A}_l| - \sum_{l=1}^{L} \sum_{k \in \mathcal{A}_l } \ln \left(1  + \mathrm{SINR}_{l,k} \right)\\
& \mbox{subject to}
&&   \hat{p}_{l,k} \leq P_{\mathrm{max}, l,k }, \; \forall l,k,\\
&&& p_{l,k} \leq  P_{\mathrm{max}, l,k }, \; \forall l,k,
\end{aligned}
\end{equation}
which is easily converted to \eqref{Prob:SumRate}, so the proof is completed.

\subsection{Proof of Theorem~\ref{Theorem:IterativeAl}}  \label{Appendix:IterativeAlgMRC}
For sake of simplicity, we omit iteration index in the proof. The optimal solution to $u_{l,k}$ and $w_{l,k}$ when the other optimization variables are fixed is respectively given in \eqref{eq:ulkopt} and \eqref{eq:wlkopt} with noting that $\hat{\rho}_{l,k} = \sqrt{\hat{p}_{l,k}}$ and $\rho_{l,k} = \sqrt{p_{l,k}}$, $\forall l,k$. The feasible set to problem \eqref{Prob:WMMSEv1} is limited in the nonnegative real field, i.e.,  $\hat{\rho}_{l,k} \geq 0$ and $\rho_{l,k} \geq 0, \forall l,k$, thus the partial Lagrangian function of problem~\eqref{Prob:WMMSEv1} is
\begin{equation} \label{eq:Lagrangian}
\begin{split}
\mathcal{L} =& \sum\limits_{l=1}^L \sum\limits_{k \in \mathcal{A}_l } (w_{l,k} e_{l,k} - \ln w_{l,k}) + \sum\limits_{l=1}^L \sum\limits_{k \in \mathcal{A}_l } \lambda_{l,k} \times \\
&
\left(\rho_{l,k}^2 - P_{\max,l,k} \right)  + \sum\limits_{l=1}^L \sum\limits_{k \in \mathcal{A}_l } \mu_{l,k} \left(\hat{\rho}_{l,k}^2 - P_{\max,l,k} \right),
\end{split}
\end{equation}
where $\lambda_{l,k}$ and $\mu_{l,k}, \forall l,k,$ are Lagrange multipliers. In order to find the optimal solution to $\hat{\rho}_{l,k}$ for a given set $\{ u_{l,k}, w_{l,k}, \rho_{l,k} \}$, we take the first-order derivative of the partial Lagrangian function with respect to this variable and equaling it to zero as
\begin{equation} \label{eq:SolLambdalk1}
\begin{split}
&\hat{\rho}_{l,k} \rho_{l,k}^2 M K_{\max} \sum\limits_{i=1}^L w_{i,k} u_{i,k}^2 (\beta_{l,k}^i)^2  + \hat{\rho}_{l,k} K_{\max} \sum\limits_{j=1}^L w_{j,k}  \times \\
&
u_{j,k}^2  \beta_{l,k}^j \left( \sum\limits_{i=1}^L \sum\limits_{t \in \mathcal{A}_i} \rho_{i,t}^2 \beta_{i,t}^j + \sigma_{\mathrm{UL}}^2 \right)  - \sqrt{ M K_{\max}} \rho_{l,k} u_{l,k} w_{l,k}\beta_{l,k}^l \\
&+ \lambda \hat{\rho}_{l,k}  = 0.
\end{split}
\end{equation}
Moreover, the relationship between Lagrange multiplier $\lambda_{l,k}$ and related variable $\hat{\rho}_{l,k}$ is represented by the complementary slackness condition \cite{Boyd2004a}
\begin{equation} \label{eq:SolLambdalk2}
\lambda_{l,k} \left( \hat{\rho}_{l,k}^2 - P_{\max,l,k} \right) =0.
\end{equation}
Solving \eqref{eq:SolLambdalk1} and \eqref{eq:SolLambdalk2} gives us the optimal solution to $\hat{\rho}_{l,k}$ as in \eqref{eq:dynamicrhohatlk}. The global optimum to $\rho_{l,k}$ for a given set of $\{u_{l,k}, w_{l,k}, \hat{\rho}_{l,k} \}$ is obtained by a similar procedure. 

Algorithm~\ref{Algorithm:AlternatingApproach} must converge to a fixed point because the objective function of problem~\eqref{Prob:WMMSEv1} is a quadratic function constrained on one optimization variable while the other are predetermined and the convex feasible set ensures a bounded objective function. The objective function of problem~\eqref{Prob:WMMSEv1} is hence monotonically non-increasing over iterations \cite{Chien2018a}. We now prove that each stationary point of \eqref{Prob:WMMSEv1} is also that of problem~\eqref{Prob:SumRate}. In detail, for convenience, we first reformulate problem~\eqref{Prob:SumRate} using the natural logarithm as
\begin{equation} \label{Prob:SumRatev3}
\begin{aligned}
& \underset{\{ \hat{\rho}_{l,k}, \rho_{l,k} \geq 0 \} }{\textrm{maximize}}
&&   \sum_{l=1}^{L} \sum_{k \in \mathcal{A}_l } \ln \left(1 + \mathrm{SINR}_{l,k} \right)\\
& \textrm{subject to}
&&   \hat{\rho}_{l,k}^2 \leq P_{l,k}, \; \forall l,k,\\
&&& \rho_{l,k}^2 \leq  P_{l,k}, \; \forall l,k,
\end{aligned}
\end{equation}
then the partial Lagrangian function of problem \eqref{Prob:SumRatev3} is 
\begin{equation}
\begin{split}
\tilde{\mathcal{L}} =& \sum_{l=1}^L \sum_{k \in \mathcal{A}_l } \ln (1 + \mathrm{SINR}_{l,k}) + \sum\limits_{l=1}^L \sum\limits_{k \in \mathcal{A}_l } \lambda_{l,k} 
\left(\rho_{l,k}^2 - P_{\max,l,k} \right) \\
&+ \sum\limits_{l=1}^L \sum\limits_{k \in \mathcal{A}_l } \mu_{l,k} \left(\hat{\rho}_{l,k}^2 - P_{\max,l,k} \right).
\end{split}
\end{equation}
In order to prove the problems \eqref{Prob:SumRate} and \eqref{Prob:WMMSEv1} share the same set of stationary points, it is sufficient to prove that these equalities hold for $\forall i,t,$
$\partial \mathcal{L} / \partial \rho_{i,t} = \partial \tilde{\mathcal{L}} / \partial \rho_{i,t}$ and $\partial \mathcal{L} / \partial \hat{\rho}_{i,t} = \partial \tilde{\mathcal{L}} / \partial \hat{\rho}_{i,t}.$ 
We prove the former which is dedicated to data power control by computing $\partial \mathcal{L} / \partial \rho_{i,t}$ as
\begin{equation} \label{eq:FirstDerivativev1}
\frac{\partial \mathcal{L}}{ \partial \rho_{i,t}} = \sum_{l=1}^L \sum_{k \in \mathcal{A}_l } w_{l,k} \frac{\partial e_{l,k}}{\partial \rho_{i,t}} + 2 \lambda_{i,t} \rho_{i,t}
\end{equation}
Notice that \eqref{eq:FirstDerivativev1} holds true for $\forall w_{l,k}$ and $u_{l,k}$, thus at $w_{l,k} = w_{l,k}^{\mathrm{opt}}$ and $u_{l,k} = u_{l,k}^{\mathrm{opt}}$ we obtain
\begin{equation} \label{eq:FirstDerivativev2}
\begin{split}
 \frac{\partial \mathcal{L}}{ \partial \rho_{i,t}}&= \sum_{l=1}^L \sum_{k \in \mathcal{A}_l } (e^{\mathrm{opt}})^{-1} \frac{\partial e^{\mathrm{opt}} }{\partial \rho_{i,t}} + 2 \lambda_{i,t} \rho_{i,t}\\
 & =  \sum_{l=1}^L \sum_{k \in \mathcal{A}_l }  
 \frac{\partial \mathrm{SINR}_{l,k} }{\partial \rho_{i,t}} \left(1+ \mathrm{SINR}_{l,k} \right)^{-1} + 2 \lambda_{i,t} \rho_{i,t} \\
 &= \frac{\partial \tilde{\mathcal{L}} }{ \partial \rho_{i,t}}.
\end{split}
\end{equation}
where $w_{l,k}^{\mathrm{opt}} = 1/ e_{l,k}^{\mathrm{opt}}$ as a consequence of \eqref{eq:wlkopt}, while $e_{l,k}^{\mathrm{opt}} = \left(1+\mathrm{SINR}_{l,k} \right)^{-1}$ is gotten by plugging \eqref{eq:ulkopt} into \eqref{Prob:WMMSEv2} and doing some algebra. The procedure to get the fact that $\partial \mathcal{L} / \partial \hat{\rho}_{i,t} = \partial \tilde{\mathcal{L}} / \partial \hat{\rho}_{i,t}$ is done in the same manner.

\bibliographystyle{IEEEtran}
\bibliography{IEEEabrv,refs}
\begin{IEEEbiography} 
	{Trinh Van Chien} received the B.S. degree in Electronics and Telecommunications from Hanoi University of Science and Technology (HUST), Vietnam, in 2012. He then received the M.S. degree in Electrical and Computer Enginneering from Sungkyunkwan University (SKKU), Korea, in 2014 and the Ph.D. degree in Communication Systems from Link\"oping University (LiU), Sweden, in 2020. His interest lies in convex optimization problems for wireless communications and image \& video processing. He was an IEEE wireless communications letters exemplary reviewer for 2016 and 2017. He also received the award of scientific excellence in the first year of the 5Gwireless project funded by European Union Horizon's 2020. 
\end{IEEEbiography}
\begin{IEEEbiography} 
	{Thuong Nguyen Canh}  received a B.Sc. (2012) in Electronics and Telecommunications from Hanoi University of Science and Technology, Vietnam in 2007, M.Sc. (2014) and Ph.D. (2019) in Computer Engineering from Sungkyunkwan University, South Korea. From 2019 to 2020, He held a postdoctoral fellow at Institute for Datability Science, Osaka University, Japan. He is currently a senior researcher at Tencent America. His research interests include image/video compression standard, computational photography, wireless communication, and applications of deep learning.  
\end{IEEEbiography}
\begin{IEEEbiography} 
	{Emil Bj\"ornson} (S'07-M'12-SM'17) received the M.S. degree in engineering mathematics from Lund University, Sweden, in 2007, and the Ph.D. degree in telecommunications from the KTH Royal Institute of Technology, Sweden, in 2011. From 2012 to 2014, he held a joint post-doctoral position at the Alcatel-Lucent Chair on Flexible Radio, SUPELEC, France, and the KTH Royal Institute of Technology. He joined Link\"oping University, Sweden, in 2014, where he is currently an Associate Professor and a Docent with the Division of Communication Systems.
	
	He has authored the textbooks \emph{Optimal Resource Allocation in Coordinated Multi-Cell Systems} (2013) and \emph{Massive MIMO Networks: Spectral, Energy, and Hardware Efficiency} (2017). He is dedicated to reproducible research and has made a large amount of simulation code publicly available. He performs research on MIMO communications, radio resource allocation, machine learning for communications, and energy efficiency. Since 2017, he has been on the Editorial Board of the IEEE TRANSACTIONS ON COMMUNICATIONS and the IEEE TRANSACTIONS ON GREEN COMMUNICATIONS AND NETWORKING since 2016.
	
	He has performed MIMO research for over ten years and has filed more than twenty MIMO related patent applications. He has received the 2014 Outstanding Young Researcher Award from IEEE ComSoc EMEA, the 2015 Ingvar Carlsson Award, the 2016 Best Ph.D. Award from EURASIP, the 2018 IEEE Marconi Prize Paper Award in Wireless Communications, the 2019 EURASIP Early Career Award, the 2019 IEEE Communications Society Fred W. Ellersick Prize, and the 2019 IEEE Signal Processing Magazine Best Column Award. He also co-authored papers that received Best Paper Awards at the conferences, including WCSP 2009, the IEEE CAMSAP 2011, the IEEE WCNC 2014, the IEEE ICC 2015, WCSP 2017, and the IEEE SAM 2014.  
\end{IEEEbiography}
\begin{IEEEbiography} 
	{Erik G. Larsson} (S'99--M'03--SM'10--F'16)
	received the Ph.D. degree from Uppsala University,
	Uppsala, Sweden, in 2002.  He is currently Professor of Communication
	Systems at Link\"oping University (LiU) in Link\"oping, Sweden. He was
	with the KTH Royal Institute of Technology in Stockholm, Sweden, the
	George Washington University, USA, the University of Florida, USA, and
	Ericsson Research, Sweden.  His main professional interests are within
	the areas of wireless communications and signal processing. He 
	co-authored \emph{Space-Time Block Coding for  Wireless Communications} (Cambridge University Press, 2003) 
	and \emph{Fundamentals of Massive MIMO} (Cambridge University Press, 2016). 
	He is co-inventor of 19 issued U.S. patents.
	
	Currently he is an editorial board member of the \emph{IEEE Signal
		Processing Magazine}, and a member of the  \emph{IEEE Transactions on Wireless Communications}    steering committee. 
	He served as  chair  of the IEEE Signal Processing Society SPCOM technical committee (2015--2016), 
	chair of  the \emph{IEEE Wireless  Communications Letters} steering committee (2014--2015), 
	General respectively Technical Chair of the Asilomar SSC conference (2015, 2012), 
	technical co-chair of the IEEE Communication Theory Workshop (2019), 
	and   member of the  IEEE Signal Processing Society Awards Board (2017--2019).
	He was Associate Editor for, among others, the \emph{IEEE Transactions on Communications} (2010-2014) 
	and the \emph{IEEE Transactions on Signal Processing} (2006-2010).
	
	He received the IEEE Signal Processing Magazine Best Column Award
	twice, in 2012 and 2014, the IEEE ComSoc Stephen O. Rice Prize in
	Communications Theory in 2015, the IEEE ComSoc Leonard G. Abraham
	Prize in 2017, the IEEE ComSoc Best Tutorial Paper Award in 2018, and
	the IEEE ComSoc Fred W. Ellersick Prize in 2019.  
\end{IEEEbiography}

\end{document}